\documentclass[leqno]{amsart}
\usepackage{amssymb}
\usepackage{color}
\usepackage{mathrsfs}
\usepackage{amsmath, amsfonts, vmargin, enumerate}
\usepackage{graphics}
\usepackage{verbatim}
\usepackage{amsthm}
\usepackage{latexsym,bm}
\usepackage{euscript}

\input xypic
\xyoption {all}

 \makeatletter





\def\>{\rangle}
\def\<{\langle}

\newtheorem{thm}{Theorem}[section]
\newtheorem{prop}[thm]{Proposition}

\newtheorem{lem}[thm]{Lemma}
\newtheorem{fact}[thm]{Fact}

\theoremstyle{definition}
\newtheorem{defi}[thm]{Definition}

\theoremstyle{remark}
\newtheorem{remark}[thm]{Remark}
\newtheorem{example}[thm]{Example}
\newtheorem{pb}[thm]{Problem}

\newenvironment{definition}{\begin{defi}\rm}{\end{defi}}

\numberwithin{equation}{section}

\newcommand{\un}{1\mkern -4mu{\rm l}}

\renewcommand{\l}{\lambda}

\newcommand{\Tr}{\mbox{\rm Tr}}

\newcommand{\8}{\infty}

\newcommand{\be}{\begin{eqnarray*}}
\newcommand{\ee}{\end{eqnarray*}}
\newcommand{\beq}{\begin{equation}}
\newcommand{\eeq}{\end{equation}}
\newcommand{\beqn}{\begin{equation*}}
\newcommand{\eeqn}{\end{equation*}}

\begin{document}

\title{Operator space approach to steering inequality}

\subjclass[2000]{Primary: 46L50, 46L07. Secondary: 58L34, 43A55}
\keywords{Steering functional, Unbounded largest violation, Operator spaces}

\author{Zhi Yin}
\address{School of Mathematics and Statistics, Wuhan University, Wuhan 430072, China; Institute of Theoretical Physics and Astrophysics, Gda{\'n}sk University, Wita Stwosza 57, 80-952 Gdansk, Poland}
\email{hustyinzhi@163.com}

\author{Marcin Marciniak}
\address{Institute of Theoretical Physics and Astrophysics, Gda{\'n}sk University, Wita Stwosza 57, 80-952 Gda{\'n}sk, Poland}
\email{matmm@ug.edu.pl}

\author{Micha{\l} Horodecki}
\address{Institute of Theoretical Physics and Astrophysics, Gda{\'n}sk University, Wita Stwosza 57, 80-952 Gda{\'n}sk, Poland; National Quantum Information Centre of Gda{\'n}sk, Andersa 27, 81-824 Sopot, Poland}
\email{fizmh@ug.edu.pl}

\date{\today}

\begin{abstract}
In \cite{JP2011,JPPVW2010} the operator space theory was applied to study bipartite Bell inequalities. The aim of the paper is to follow this line of research and use the operator space technique to analyze the steering scenario. We obtain a bipartite steering functional with unbounded largest violation of steering inequality, as well as we can construct all ingredients explicitly. It turns out that the unbounded largest violation is obtained by non maximally entangled state. Moreover, we focus on the bipartite dichotomic case where we 
construct
a steering functional with unbounded largest violation of steering inequality. This phenomenon is different to the Bell scenario where only bounded largest violation can be obtained by any bipartite dichotomic Bell functional.
\end{abstract}
\maketitle

\markboth{Z. Yin, M. Marciniak and M. Horodecki}%
{Operator space approach to steering inequality}



\section{Introduction}

The violation of local realism, called usually nonlocality, plays an important role in quantum information science. It was first studied in 1935 by Einstein, Podolsky and Rosen \cite{EPR1935}. In their paradoxical paper, Einstein, Podolsky, and Rosen (EPR) argued that quantum mechanics does not provide a complete description of the elements of reality. Moreover, they predicted either quantum mechanics develops to a complete theory or quantum mechanics is replaced by another complete theory. In the paper, EPR wrote: ``We believe that such (complete) a theory is possible." Theories compatible with the EPR's ideas are called ``local-realistic (LR) theories". Although Bohr rebutted shortly after the EPR's ideas, their arguments, without observational consequences, did not suggest a clear conclusion, hence the debate has then subsided.

The EPR arguments were resurfaced by J. S. Bell in 1964 when he derived a constraint for correlation between two remote subsystems, known as Bell's inequality which is satisfied by all LR theories. He proved that it is violated by quantum correlations of two spin 1/2 particles in a singlet state, i.e., an entangled state. States that violate some Bell inequalities form strict subset of the set of entangled states \cite{Werner1989}.
In \cite{WJD2007} the authors proposed an intermediate form of quantum correlations between Bell nonlocality and entanglement, by use
of {\it quantum steering}. The latter concept was introduced by Schr\"{o}dinger in 1935 to reply the EPR paradox \cite{Sch1936}. Wiseman, Jones and Doherty reformulated this concept in a rigorous way \cite{WJD2007} and have, in particular,  shown that the set of states admitting steering is a strict subset of entangled states on one hand and a strict superset of states violating Bell inequalities. Since then, quantum steering has attracted more and more attention both in theory \cite{CYWSCKO2013,NG2012,Pusey2013,WJD2007} and experiment \cite{SJWP2010,Smith2012}.

The simplest example of quantum steering is the following one, which was a basis for famous EPR paradox \cite{EPR1935}
(in Bohm version \cite{Bohm1951}). Namely, when  Alice and Bob share a pair of particles in singlet state, Alice,
by choosing one of two measurements can create at Bob's site one of two ensembles:
one consisting of basis states $|0\>$ and $|1\>$ with equal probabilities and the other, consisting of
complementary states $\frac{1}{\sqrt2}(|0\>+|1\>)$ and $\frac{1}{\sqrt2}(|0\>-|1\>)$, again with equal probabilities.
It turns out that this would be impossible, if Bob particle were in some well defined state, perhaps unknown to him -- so called ``local hidden state" (LHS),
and Alice merely used her knowledge about the state. Thus, existence of the above Alice's measurements proves that the shared state is entangled. Remarkably, Alice can in this way convince Bob, that the shared state is entangled even if Bob does not trust her.
Indeed, Bob can ask Alice to create one of the two ensembles at random, and
upon receiving message from Alice telling which outcome she obtained,
he can verify that indeed she created (or: ``steered" to) the above states with the mentioned probabilities,
provided many runs of the experiment are performed.
More generally, bipartite states for which there exist measurements of Alice steering to ensembles,
which doesn't come from local hidden state  are called steerable (or admitting quantum steering).

In steering scenario, one can study ``steering functionals" which are analogs of Bell functionals. The violation of a steering inequality for such functionals provides a natural way to quantify the deviation from a LHS description. However it is not easy to compute the violation for a given steering functional analytically; one usually uses here numerical method, called semi-definite program \cite{Pusey2013,SNC2013}. In this paper, we will use the {\it operator space theory}, which has been widely developed after the pioneering and fundamental work of Effros-Ruan and Blecher-Paulsen \cite{ER2000,Pisier2003}, to study the violation of steering inequality. Our work is motivated by a series works of Junge, Palazuelos, P\'{e}rez-Garc\'{i}a, Villanueva and Wolf \cite{JP2011,JPPVW2010,PWPVM2008}, where they used operator space theory to analyze Bell inequalities. We will briefly recall their work in subsection 2.2. In their work, operator space was connected to the largest violation of Bell inequality. As steering inequality is closely connected to Bell inequality, we are able to apply their strategy. 
According to their work, following results are natural:
We can construct a probabilistic steering functional with unbounded largest violation of steering inequality in the sense of ``with high probability". Non maximally entangled state plays an important role in violation of steering inequality for this functional.

Moreover, by mixing white noise with this non maximally entangled state, we can get a family of PPT (Positive Partial Transpose) states. Actually, these states belong to a class of PPT states, which was constructed in \cite{CK2006}. We prove that for any steering functional, only bounded largest violation of steering inequality can be obtained by these kinds of PPT states. Thus even though there are PPT states violating some steering inequalities \cite{MGHG2014} (problem posed in \cite{Pusey2013}), our example provides some evidence, that PPT states cannot provide unbounded violation.

However, not all properties of Bell functionals can be inherited by steering ones. It is well known that the quantum bound of any bipartite dichotomic Bell functional is bounded by the classical bound with an universal constant (Grothendieck constant) \cite{Pisier2012,PWPVM2008}. As reported in a companion paper \cite{steering_short} in steering scenario, this is not true: there is a bipartite dichotomic steering functional with unbounded largest violation. Here we put the example of \cite{steering_short} into the framework of operator spaces and show how the inequality arises from a bounded but not completely bounded
map from $\ell_\8^n$ to $\mathbb{M}_n.$

In this paper, operator spaces and their tensor products are the main mathematical tools. For more information about operator space theory, see \cite{ER2000,Pisier2003}. We will use various kinds of operator spaces, such as $\ell_{\8}^n, \ell_1^n, \ell_1^n(\ell_\8^n), R_n, C_n$ and $OH_n$. We strongly recommend readers to refer \cite{JP2011,JPPVW2010} for these notions.

We finish this introduction by setting the following convention: throughout this paper, we will use $\gtrsim$ and $\lesssim$ to denote the inequality up to an universal constant irrelevant to $n\in \mathbb{N},$ and we also use the Dirac symbol $|i\rangle\langle j|, i,j=1,\ldots,n$ to denote the canonical basis of $\mathbb{M}_n.$ For $k\in\mathbb{N}$ we denote by $(e_1,\ldots,e_k)$ the canonical basis of $\ell_1^k$ and by $(e_1',\ldots,e_k')$ its dual basis in $\ell_\8^k$. Then the system $(e_x\otimes e_a')_{x=1,\ldots,n;\,a=1,\ldots,m}$ forms a basis of the space $\ell_1^n(\ell_\infty^m)$ as well as the system $(e_x'\otimes e_a)_{x=1,\ldots,n\;\,a=1,\ldots,m}$ is a basis of $\ell_\8^n(\ell_1^m)$.

\vspace{5mm}

\section{Main result}
\subsection{Definition of steering functional and its largest violation}
We consider the following steering scenario \cite{Pusey2013}. Assume that there are two systems  A (Alice) and B (Bob). Suppose Alice can choose among $n$ different measurement settings labeled by $x=1,\ldots,n$. Each of them can result in one of $m$ outcomes, labeled by $a=1,\ldots, m.$ Suppose also that Bob has a $d$-dimensional quantum system $H_d$ at his disposal.
\begin{definition}
\label{d:assem}
An {\em assemblage} is a set $\{\sigma_x^a:\,x=1,\ldots,n,\;a=1,\ldots,m\}$ of $d \times d$ Hermitian matrices  satisfying the following conditions:
\begin{enumerate}
\item[i)]
$\sigma_x^a \geq 0$ (positivity);
\item[ii)]
$\sum_a \sigma_x^a$ is independent of $x$ and trace 1.
\end{enumerate}

We say that a set $\{\sigma_x^a\}$ of positive matrices forms an {\em incomplete assemblage} if it satisfies the following condition instead of the above condition ii)
\begin{enumerate}
\item[ii')]
$\Tr\left(\sum_a\sigma_x^a\right)\leq 1$ for every $x$.
\end{enumerate}
\end{definition}

It turns out (\cite{Sch1936} and later \cite{HJW1993}) that any assemblage (respectively incomplete assemblage) has a quantum realization.
It means that for any assemblage
there exist a Hilbert space $H$, a density matrix $\rho\in B(H\otimes H_d)$ and a family $\{E_x^a\}_a$ of positive operators on $H$ such that $\sum_a E_x^a =\un$ (respectively $\sum_a E_x^a\leq \un$) for every $x$, and
        \beq \sigma_x^a= \Tr_A ((E_x^a \otimes \un_B) \rho).\eeq
We denote the set of all quantum assemblages (respectively incomplete quantum assemblages) by $\mathcal{Q}$ (respectively by $\mathcal{Q}^{\mathrm{in}}$).

Next, we distinguish the classical part of the set assemblages.
\begin{definition}
We say that an assemblage (respectively incomplete assemblage) admits a \textit{local hidden state (LHS)} model, if
there exist finite set of indices $\Lambda$, nonnegative coefficients $q_\lambda$ such that $\sum_\lambda q_\lambda=1$, density matrices $\sigma_\lambda\in B(H_d)$ for $\lambda\in\Lambda$, and nonnegative numbers $p_\lambda(a|x)$ such that $\sum_ap_\lambda(a|x)=1$ (respectively $\sum_a p_\lambda(a|x)\leq 1$) for every $x,\lambda$, and
\beq
\sigma_x^a = \sum_{\lambda\in\Lambda} q_\lambda p_\lambda (a|x) \sigma_\lambda
\eeq
for every $x$ and $a$.
We denote the set of LHS assemblages (respectively LHS incomplete assemblages) by $\mathcal{L}$ (respectively $\mathcal{L}^{\mathrm{in}}$).
\end{definition}

It is known \cite{WJD2007} that $\mathcal{L} \subsetneq \mathcal{Q}$.
Our aim is to quantify the difference between sets $\mathcal{Q}$ and $\mathcal{L},$ and our strategy is to analyze some functionals to see how much value they can obtain on the quantum assemblages comparing with values on LHS assemblages.
\begin{definition}
For given natural numbers $n, m$ and $d$, define (\cite{Pusey2013}) a \textit{steering functional} $F$ as a set $\{F_x^a:\;x=1,\ldots,n,\;a=1,\ldots,m\}$ of $d \times d$ matrices.
The functional maps an assemblage $\sigma$ to a real number
\begin{equation}
\label{e:action}
\langle F,\sigma\rangle =\sum_{x=1}^n\sum_{a=1}^m \Tr(F_x^a \sigma_x^a).
\end{equation}
\end{definition}




\begin{remark}
The notion of steering inequality was first introduced by E. G. Cavalcanti, S. J. Jones, H. M. Wiseman, and M. D. Reid \cite{CJWR2009}. Given a steering functional, a steering inequality says that
\begin{equation}
\langle F,\sigma\rangle \leq B_C(F),\qquad \sigma\in\mathcal{L}^{\mathrm{in}}.
\end{equation}
\end{remark}

Now we can define
\begin{defi}
Given a steering functional $F=\{F_x^a\in B(H_d): x=1,\ldots,n,\; a=1,\ldots,m\},$ we define the \textit{LHS bound} of $F$ as a number
\beq B_C (F)=\sup \{|\langle F,\sigma\rangle|: \sigma \in \mathcal{L}^\mathrm{in} \},\eeq
and the \textit{quantum bound} of $F$ as
\beq B_Q (F)=\sup \{ |\langle F,\sigma\rangle|: \sigma\in \mathcal{Q}^\mathrm{in} \}.\eeq
We define the \textit{largest quantum violation of steering inequality} for $F$ as a positive number
\beq LV(F)=\frac{B_Q(F)}{B_C(F)}.\eeq
In the sequel we will frequently call this number shortly \textit{largest violation} for $F$.
\end{defi}

\subsection{Junge-Palazuelos approach to violation of Bell inequality}

M. Junge, C. Palazuelos, D. P\'{e}rez-Garc\'{i}a, I. Villanueva and M. M. Wolf studied the following largest violation of Bell inequality for a bipartite Bell functional $M$ by using operator space theory \cite{JP2011,JPPVW2010}:
\beq
LV(M)= \frac{B_Q(M)}{B_C(M)},
\eeq
where $M=\sum_{x,y=1}^{n}\sum_{ a,b=1}^{m} M_{x,y}^{a,b}\, e_x \otimes e_a' \otimes e_y \otimes e_b' \in \ell_1^n(\ell_\8^m)\otimes \ell_1^n(\ell_\8^m),$ and
\beq
B_Q(M)= \sup \Big\{ \Big| \sum_{x,y,a,b} M_{x,y}^{a,b} \Tr((E_x^a \otimes E_y^b ) \rho ) \Big|: \mbox{$E_x^a, E_y^b$ are incomplete POVMs, $\rho$ is a state},\Big\}
\eeq
\beq
B_C(M)= \sup \Big\{ \Big| \sum_{x,y,a,b} M_{x,y}^{a,b} \sum_{\lambda} \rho(\lambda)\mathbb{P}(a|x,\lambda)\mathbb{P}(b|y,\lambda) \Big|: \sum_{a}\mathbb{P}(a|x,\lambda) \leq 1, \sum_b \mathbb{P}(b|y,\lambda)\leq 1 ,\Big\}
\eeq

They linked the largest violation of Bell inequality to the {``\it min vs $\epsilon$"} problem of $M,$ i.e, they obtained following result (See \cite[Propostion 4, Theorem 6]{JPPVW2010}):

\begin{prop}\label{prop:transfer}
Given $M= \sum_{x,y=1}^{n}\sum_{ a,b=1}^{m} M_{x,y}^{a,b}\, e_x \otimes e_a' \otimes e_y \otimes e_b' \in \ell_1^n(\ell_\8^m)\otimes \ell_1^n(\ell_\8^m),$ we have the following equivalence:
\begin{enumerate}[{\rm i)}]
\item Classical bound:
\begin{equation}\label{eq:classical}
B_C (M) \approx \|M\|_{\ell_1^n(\ell_\8^m)\otimes_{\epsilon} \ell_1^n(\ell_\8^m)}.\end{equation}
\item Quantum bound:
\begin{equation}\label{eq:quantum}
B_Q (M) \approx \|M\|_{\ell_1^n(\ell_\8^m)\otimes_{\min} \ell_1^n(\ell_\8^m)}.\end{equation}
\end{enumerate}
\end{prop}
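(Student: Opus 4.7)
My plan is to prove both equivalences separately by identifying, up to universal constants, each bound with a supremum over a suitable unit ball described through the operator-space structure of $\ell_1^n(\ell_\infty^m)$. The central fact I will exploit is the duality $\ell_1^n(\ell_\infty^m)^* = \ell_\infty^n(\ell_1^m)$, together with the standard identities $CB(\ell_1^n(E),F) = \ell_\infty^n(CB(E,F))$ isometrically, and the fact that any cb map from a commutative $C^*$-algebra like $\ell_\infty^m$ into $B(H)$ decomposes à la Wittstock as a complex combination of four completely positive maps with control on cb norms.

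For part i), I would first unpack $\|M\|_{\epsilon}$ as $\sup |\langle M,\alpha\otimes\beta\rangle|$ where $\alpha=(\alpha_x^a), \beta=(\beta_y^b)$ range over the unit ball of $\ell_\infty^n(\ell_1^m)$, i.e.\ $\sup_x\sum_a|\alpha_x^a|\le 1$. Next I would observe that the classical bound $B_C(M)$ restricts this supremum to pairs of \emph{positive} tuples (incomplete probabilities) and further convexifies via $\sum_\lambda\rho(\lambda)\alpha(\lambda)\otimes\beta(\lambda)$; however, by linearity in $\alpha,\beta$ and the triangle inequality, the convex combination cannot exceed the maximum over single product strategies, so
\beqn
B_C(M) = \sup\bigl\{|\langle M,\alpha\otimes\beta\rangle|:\alpha,\beta\ge 0,\text{ in unit ball}\bigr\}.
\eeqn
The inclusion $B_C(M)\le\|M\|_\epsilon$ is then immediate, while the reverse is obtained by splitting a general $\alpha$ in the complex unit ball as $\alpha=\alpha_1-\alpha_2+i(\alpha_3-\alpha_4)$ with each $\alpha_j\ge0$ of norm at most $1$, and similarly for $\beta$, losing only a universal constant.

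For part ii), I would start from the defining formula $\|M\|_{\min} = \sup\|(u\otimes v)(M)\|_{B(H\otimes K)}$ taken over complete contractions $u,v\colon\ell_1^n(\ell_\infty^m)\to B(H),B(K)$. Writing $u(e_x\otimes e_a')=E_x^a$, the identity $CB(\ell_1^n(\ell_\infty^m),B(H))=\ell_\infty^n(CB(\ell_\infty^m,B(H)))$ reduces the cb contraction condition to $\max_x\|u_x\|_{cb}\le 1$, where $u_x(f)=\sum_a f(a)E_x^a$. An incomplete POVM $\{E_x^a\}_a$ makes $u_x$ completely positive with $u_x(\mathbf{1})=\sum_aE_x^a\le I$, hence $\|u_x\|_{cb}\le 1$. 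Conversely, by Wittstock's decomposition applied coordinate-wise in $x$, any complete contraction $u$ is a combination of at most four such POVM-type maps, at the cost of a universal constant. Then I would compute $(u\otimes v)(M)=\sum M_{x,y}^{a,b}E_x^a\otimes F_y^b$ and use $\|T\|_{B(H\otimes K)}\approx\sup_\rho|\Tr(T\rho)|$ over trace-one positive $\rho$ (via the numerical-radius equivalence applied to real and imaginary parts of $T$, with $T$ self-adjoint when $M$ is real), obtaining $\Tr(T\rho)=\sum M_{x,y}^{a,b}\Tr((E_x^a\otimes F_y^b)\rho)$, which is exactly the quantity defining $B_Q(M)$.

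The main obstacle I expect is bookkeeping the universal constants, especially tracking the Wittstock decomposition factor in the min-tensor direction and verifying that the resulting $E_x^a$ obtained from a generic complete contraction can genuinely be rescaled into an incomplete POVM without distorting $\langle M,\sigma\rangle$ more than by a universal factor; this is why the equivalences are stated with $\approx$ rather than equality. A secondary subtlety is passing from sup over pure states to sup over mixed states (trivial by convexity) and checking that restricting to finite-dimensional ancilla suffices, which follows from local finite-dimensional approximation of cb maps into $B(H)$.
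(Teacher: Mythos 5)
Your proposal is correct and takes essentially the same route as the source of this statement: the paper does not prove Proposition \ref{prop:transfer} itself but quotes it from \cite{JPPVW2010}, and your argument (duality $\ell_1^n(\ell_\infty^m)^*=\ell_\infty^n(\ell_1^m)$ with the decomposition of a unit-ball element into four positive subnormalized parts for the classical/$\epsilon$ bound; the identification of complete contractions on $\ell_1^n(\ell_\infty^m)$ with tuples of cb maps on $\ell_\infty^m$, subunital completely positive maps being exactly incomplete POVMs, Wittstock's decomposition for the converse, and the numerical-radius factor $2$ to pass from operator norms to expectations in states) is precisely the strategy used there. The same ingredients also appear in the paper's own proof of the steering analogue, Proposition \ref{prop:minandepsilon}, so no further comment is needed beyond the bookkeeping of universal constants, which you already flag.
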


Here the notion $X\otimes_\epsilon Y$ denotes the injective tensor product of two Banach spaces $X$ and $Y$ \cite{Tak1979}. And for two operator spaces $E$ and $F$, $E\otimes_{\min} F$ denotes the minimal tensor product of operator space \cite{Pisier2003}. $\ell_1^n(\ell_\8^m)$ is a Banach space in \eqref{eq:classical} and it is an operator space in \eqref{eq:quantum}. However, it is difficult to calculate the tensor norm of element in $\ell_1^n(\ell_\8^m)\otimes \ell_1^n(\ell_\8^m)$. To overcome this obstacle, they used following fact (see \cite[Theorem 3.2]{JP2011}):
\begin{fact}\label{fact:map}
There exist $\delta \in (0,\frac{1}{2})$ and an universal constant $C$ such that, for every $n$, there are maps $V: \ell_2^{[\delta n]} \to \ell_1^n(\ell_\8^n)$ and $V': \ell_1^n(\ell_\8^n) \to \ell_2^{[\delta n]}$ satisfying $\|V\|\leq C \sqrt{\log n}$, $\|V'\|\leq 1$ and $V' V= id_{\ell_2^{[\delta n]}}$. Moreover, the map $V'$ is completely bounded from $\ell_1^n(\ell_\8^n)$ to $R_{[\delta n]}$ and $\|V'\|_{cb}\leq K_L \|V'\| \lesssim 1$, where $K_L$ is the constant in little Grothendieck theorem.
\end{fact}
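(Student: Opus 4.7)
The construction of $V$ and $V'$ is probabilistic. The completely bounded estimate for $V'$ is then read off from the operator-space version of the little Grothendieck theorem, independently of the particular construction.

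\textbf{Random construction.} Fix a small constant $\delta>0$ to be determined and let $(g_{i,x,a})_{1\le i\le[\delta n],\;1\le x,a\le n}$ be i.i.d.\ standard Gaussians. Up to a common normalization, set
\[
Ve_i = \sum_{x,a}g_{i,x,a}\,e_x\otimes e_a',\qquad V'(e_x\otimes e_a')=\sum_ig_{i,x,a}\,e_i.
\]
The composition $V'V$ is the Wishart-type matrix $\bigl(\sum_{x,a}g_{i,x,a}g_{j,x,a}\bigr)_{ij}$, which for $[\delta n]\ll n^2$ concentrates at a scalar multiple of the identity. A small Gram--Schmidt/perturbation correction then upgrades this to $V'V=\mathrm{id}_{\ell_2^{[\delta n]}}$ exactly, at the cost of only a constant-factor perturbation of the ambient norms.

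\textbf{Norm estimates.} Bound $\mathbb E\|V\|$ via Chevet's inequality for Gaussian operators into a Banach space. Since the norm on $\ell_1^n(\ell_\infty^n)$ equals $\sum_x\max_a|\cdot|$, one picks up a factor $n$ from $\sum_x$ and a factor $\sqrt{\log n}$ from $\max_a$ (the expected maximum of $n$ independent standard Gaussians); after dividing by the chosen normalization this yields $\|V\|\lesssim\sqrt{\log n}$ with high probability. A dual Chevet computation, using $(\ell_1^n(\ell_\infty^n))^*=\ell_\infty^n(\ell_1^n)$, produces $\|V'\|\lesssim 1$. The lower bound on $V$, necessary to make the correction step possible and to enforce $\|V'\|\le 1$ together with $V'V=\mathrm{id}$, comes from Gaussian concentration on the sphere of $\ell_2^{[\delta n]}$ combined with a standard $\varepsilon$-net argument; it closes provided $\delta$ is small enough, so that the per-direction deviation probability $e^{-\Omega(n)}$ dominates the net cardinality $e^{O(\delta n)}$.

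\textbf{Completely bounded norm and main obstacle.} The final assertion $\|V'\|_{cb}\le K_L\|V'\|$ is not a property of the specific construction: it is an instance of the operator-space form of the little Grothendieck theorem of Pisier, which states that every bounded linear map from $\ell_1^n(\ell_\infty^n)$ (regarded as an operator space through its natural realization) into the row Hilbert space $R_k$ is automatically completely bounded, with cb-norm at most $K_L$ times the ordinary operator norm. The delicate point in the proof is therefore not the cb-bound but the \emph{simultaneous} enforcement of the sharp equalities/inequalities $V'V=\mathrm{id}$, $\|V'\|\le 1$, and $\|V\|\le C\sqrt{\log n}$. The raw Wishart matrix $V'V$ is only approximately the identity, so the Gram--Schmidt-type correction must be performed carefully so that the induced change in $\|V\|$ and $\|V'\|$ stays within a universal constant that can be absorbed into~$C$; this is the technical heart of the argument, everything else being standard Gaussian-process machinery plus the cited Grothendieck inequality.
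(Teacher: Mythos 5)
The paper itself gives no proof of this Fact --- it is imported verbatim from \cite[Theorem 3.2]{JP2011} --- and your outline follows essentially the same route as that cited proof: a Gaussian random choice of $V$, a Chevet-type estimate giving $\|V\|\lesssim\sqrt{\log n}$ (this is exactly the role of \cite[Lemma 3.5]{JP2011}, which the paper also invokes later for the explicit functional), concentration of the Wishart-type matrix $V'V$ followed by a perturbation (e.g.\ replacing $V$ by $V(V'V)^{-1}$) to enforce $V'V=\mathrm{id}$, and the little Grothendieck theorem for the completely bounded estimate on $V'$. The only imprecision worth noting is that $\ell_1^n(\ell_\infty^n)$ is not a $C(K)$-space, so the cb-bound is not literally an instance of the little Grothendieck theorem: one first uses that cb-maps out of the operator-space $\ell_1$-sum satisfy $\|V'\|_{cb}=\max_x\|V'_x\|_{cb}$ for the block restrictions $V'_x:\ell_\infty^n\to R_{[\delta n]}$, and then applies the little Grothendieck theorem (via $2$-summing norms) to each block, which is precisely how the lemma you quote is established in \cite{JP2011,JPPVW2010}.
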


Compare to $\ell_1^n(\ell_\8^n),$ the tensor norms on $\ell_2^n$ are easier to calculate. Through this approach, they obtained a Bell inequality with unbounded largest violation of order $\frac{\sqrt{n}}{\log n}.$

\subsection{Unbounded largest violation of steering inequality}

It is well known that there is a very close relation between Bell inequality and steering one \cite{CJWR2009}. So it is not surprising for us to find a steering functional with unbounded largest violation of steering inequality through their approach. Our main result concerns the case when all the numbers $n$, $m$ and $d$ are equal. It can be stated as follows:

\begin{thm}\label{thm:LargeViolation}
For every $n\in \mathbb{N},$ we can find a steering functional $F=\{F_x^a\in B(H_n): x,a=1, \ldots, n\},$ such that
\beq LV (F)\gtrsim \frac{\sqrt{n}}{\sqrt{\log{n}}}.\eeq
\end{thm}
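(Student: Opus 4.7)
The plan is to mimic the Junge--Palazuelos construction from Section~2.2, exploiting the asymmetry of the steering scenario: Alice's side contributes the classical $\ell_1^n(\ell_\infty^n)$ factor exactly as in the Bell case, while Bob's side is already quantum and contributes an $\mathbb{M}_n$ factor directly, without any need for a second $V$-lift. This is precisely what I expect to save a factor of $\sqrt{\log n}$ and yield a final violation of order $\sqrt{n}/\sqrt{\log n}$ rather than the $\sqrt{n}/\log n$ of the bipartite Bell setting.

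First I would establish the steering analog of Proposition~\ref{prop:transfer}: writing $F = \sum_{x,a} e_x \otimes e_a' \otimes F_x^a \in \ell_1^n(\ell_\infty^n) \otimes \mathbb{M}_n$, one has
\[
B_C(F) \approx \|F\|_{\ell_1^n(\ell_\infty^n) \otimes_\epsilon \mathbb{M}_n}, \qquad B_Q(F) \approx \|F\|_{\ell_1^n(\ell_\infty^n) \otimes_{\min} \mathbb{M}_n}.
\]
The LHS identification follows because every LHS incomplete assemblage is a convex combination of products $\pi_\lambda \otimes \sigma_\lambda$ with $\pi_\lambda$ in the positive part of the unit ball of $\ell_\infty^n(\ell_1^n) = \ell_1^n(\ell_\infty^n)^*$ and $\sigma_\lambda$ a density matrix in $\mathbb{M}_n^* = S_1^n$. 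The quantum identification comes from rewriting $\sum \Tr(F_x^a \sigma_x^a) = \Tr\bigl((\sum_{x,a} E_x^a \otimes F_x^a)\rho\bigr)$ and invoking the standard operator-space characterization of completely bounded maps out of $\ell_1^n(\ell_\infty^n)$ via incomplete POVMs.

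Next I would set $N := [\delta n]$ and
\[
\tilde F := \sum_{i=1}^N e_i \otimes |1\rangle\langle i| \in \ell_2^N \otimes \mathbb{M}_n, \qquad F := (V \otimes id_{\mathbb{M}_n})(\tilde F),
\]
where $V$ is the map from Fact~\ref{fact:map}. Writing $V(e_i) = \sum_{x,a} v^i_{x,a} e_x \otimes e_a'$, one gets $F_x^a = \sum_{i=1}^N v^i_{x,a} |1\rangle\langle i|$, a rank-one ``row'' matrix for each $(x,a)$. Because the injective Banach tensor norm is multiplicative under bounded maps,
\[
\|F\|_\epsilon \leq \|V\| \cdot \bigl\|\tilde F\bigr\|_{\ell_2^N \otimes_\epsilon \mathbb{M}_n} \lesssim \sqrt{\log n} \cdot \sup_{\|\xi\|_2 \leq 1} \Big\|\sum_i \xi_i |1\rangle\langle i|\Big\|_\infty = \sqrt{\log n},
\]
so $B_C(F) \lesssim \sqrt{\log n}$. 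For the lower bound on $B_Q$, the identity $V'V = id_{\ell_2^N}$ gives $(V' \otimes id_{\mathbb{M}_n})(F) = \tilde F$, and since $V'$ is completely bounded as a map into $R_N$,
\[
\bigl\|\tilde F\bigr\|_{R_N \otimes_{\min} \mathbb{M}_n} \leq \|V'\|_{cb}\,\|F\|_{\ell_1^n(\ell_\infty^n) \otimes_{\min} \mathbb{M}_n} \lesssim \|F\|_{\min}.
\]
Identifying $R_N \otimes_{\min} \mathbb{M}_n$ with the row-of-blocks subspace of $\mathbb{M}_{Nn}$, one computes $\|\tilde F\|_{R_N \otimes_{\min} \mathbb{M}_n} = \|\sum_i |1\rangle\langle i|\cdot|i\rangle\langle 1|\|^{1/2} = \sqrt{N} \gtrsim \sqrt{n}$. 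Hence $B_Q(F) \gtrsim \sqrt{n}$ and $LV(F) \gtrsim \sqrt{n}/\sqrt{\log n}$.

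The main obstacle I expect is a clean proof of the steering analogue of Proposition~\ref{prop:transfer}, in particular the identification of $B_Q$ with the min tensor norm: unlike the Bell case, where both factors of the tensor product have the same $\ell_1(\ell_\infty)$ shape, here the $\mathbb{M}_n$ factor must be handled with its intrinsic matrix operator space structure, so the duality between incomplete POVMs on Alice's side and cb maps into $B(H)$ must be set up carefully to match the min tensor norm up to universal constants. The choice of $\tilde F$ with all $|1\rangle\langle i|$ supported in a single row is what makes the resulting optimal Bob state non-maximally entangled, matching the qualitative remark in the abstract.
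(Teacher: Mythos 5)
Your proposal is correct and follows essentially the same route as the paper: the same reduction of $B_C$ and $B_Q$ to the $\epsilon$ and $\min$ tensor norms (the paper's Proposition \ref{prop:minandepsilon}), and the same element $F=(V\otimes W)\bigl(\sum_k e_k\otimes e_k\bigr)$ — your $(V\otimes id)(\tilde F)$ is literally this element — with the identical estimates $\|F\|_\epsilon\lesssim\sqrt{\log n}$ via $\|V\|$ and $\|F\|_{\min}\gtrsim\sqrt{n}$ via the completely bounded map $V'$ into $R_{[\delta n]}$ and the row-space computation $\bigl\|\sum_k|1\rangle\langle k|k\rangle\langle 1|\bigr\|^{1/2}=\sqrt{[\delta n]}$. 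The only cosmetic difference is that you absorb the contraction $W$ into the definition of $\tilde F$ rather than stating it as a separate map.
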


Let $F=\{F_x^a\}_{x,a}$ be a steering functional. We will identify $F$ with the following element of the tensor product $\ell_1^n(\ell_\infty^m)\otimes B(H_d)$:
\beq \sum_{x=1}^n\sum_{a=1}^m (e_x \otimes e_a') \otimes F_x^a.\eeq
We still use $F$ to denote this element.

Now, in the spirit of Proposition \ref{prop:transfer}, we can link the problem of largest violation of steering functional to the {``min vs $\epsilon$"} problem for $F$.

\begin{prop}\label{prop:minandepsilon}
Given $F= \sum_{x,a} (e_x \otimes e_a') \otimes F_x^a \in \ell_1^n(\ell_\8^m) \otimes B(H_d),$ we have the following equivalence:
\begin{enumerate}[{\rm i)}]
\item LHS bound:
\begin{equation}\label{eq:classical}
B_C(F)\leq \|F\|_{\ell_1^n(\ell_\8^m) \otimes_{\epsilon} B(H_d)} \leq 16 B_C(F).\end{equation}
\item Quantum bound:
\begin{equation}\label{eq:quantum}B_Q (F) \leq \|F\|_{\ell_1^n(\ell_\8^m) \otimes_{\min} B(H_d)} \leq 4 B_Q(F).\end{equation}
\end{enumerate}
\end{prop}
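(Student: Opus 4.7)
My approach is to follow the template of Proposition~\ref{prop:transfer}, with the second copy of $\ell_1^n(\ell_\infty^m)$ replaced by $B(H_d)$; I use the Banach dualities $\ell_1^n(\ell_\infty^m)^*=\ell_\infty^n(\ell_1^m)$ and $B(H_d)^*=S_1^d$. The plan is to rewrite each of $B_C(F)$ and $B_Q(F)$ as a supremum of the bilinear pairing $\sum_{x,a}\phi_{x,a}\Tr(F_x^a\psi)$ over a specified subset of the two unit balls, and to compare against the corresponding tensor norm by a decomposition argument; the constants $16$ and $4$ arise from the number of positive pieces needed on each side.

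For (i), I first collapse the convex combination in the LHS formula $\sigma_x^a=\sum_\lambda q_\lambda p_\lambda(a|x)\sigma_\lambda$ to get
\[
B_C(F)=\sup\Bigl\{\Bigl|\sum_{x,a}p(a|x)\Tr(F_x^a\sigma)\Bigr|:\,p(a|x)\geq 0,\ \sum_a p(a|x)\leq 1,\ \sigma\geq 0,\ \Tr\sigma\leq 1\Bigr\},
\]
which is exactly $\|F\|_{\ell_1^n(\ell_\infty^m)\otimes_\epsilon B(H_d)}$ restricted to positive elements of the two dual unit balls; hence $B_C(F)\leq\|F\|_\epsilon$ is immediate. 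For $\|F\|_\epsilon\leq 16\,B_C(F)$, I decompose an arbitrary $\phi$ in the unit ball of $\ell_\infty^n(\ell_1^m)$ as $\phi=\phi_1-\phi_2+i(\phi_3-\phi_4)$ with entrywise-nonnegative $\phi_j$ each of norm $\leq 1$ (real/imaginary split, then positive/negative parts), and analogously decompose $\psi$ in the unit ball of $S_1^d$ into four positive operators of trace norm $\leq 1$ via self-adjoint and spectral decompositions; the triangle inequality applied to the $16$ resulting pairings gives the bound.

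For (ii), the crucial operator-space input is that a family $\{E_x^a\}\subset B(H)$ is an incomplete POVM (i.e.\ $E_x^a\geq 0$ and $\sum_a E_x^a\leq I$ for each $x$) if and only if the linear map $\Phi:\ell_1^n(\ell_\infty^m)\to B(H)$ sending $e_x\otimes e_a'\mapsto E_x^a$ is a completely positive contraction, with respect to the natural matrix order on $\ell_1^n(\ell_\infty^m)$ inherited from its operator-space duality with $\ell_\infty^n(\ell_1^m)$. Granting this, the bound $B_Q(F)\leq\|F\|_{\min}$ follows by rewriting a quantum realization as $\langle F,\sigma\rangle=\Tr(\rho\,(\Phi\otimes\mathrm{id})(F))$ and estimating by $\|\rho\|_1\,\|\Phi\|_{cb}\,\|F\|_{\min}\leq\|F\|_{\min}$. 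For the reverse with factor $4$, use the dual description $\|F\|_{\min}=\sup_\Phi\|(\Phi\otimes\mathrm{id})(F)\|_{B(H)\otimes_{\min}B(H_d)}$ over complete contractions $\Phi$, apply a Wittstock/Paulsen-type Hahn--Banach decomposition to write $\Phi=\Phi_1-\Phi_2+i(\Phi_3-\Phi_4)$ with $\Phi_j$ completely positive contractions, identify each $\Phi_j$ with an incomplete POVM on $H$, and---using that $F_x^a$ is Hermitian, as implicit in the definition of a steering functional---observe that each $(\Phi_j\otimes\mathrm{id})(F)$ is self-adjoint so that its operator norm is attained on a state $\rho$; summing the four terms yields $\|F\|_{\min}\leq 4\,B_Q(F)$.

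The main obstacle is the operator-space input of part (ii): the bijection between incomplete POVMs and c.p.\ contractions out of $\ell_1^n(\ell_\infty^m)$, together with the Wittstock-style $4$-term cp decomposition of a general complete contraction into cp ones. Both facts rely on the matrix order on $\ell_1^n(\ell_\infty^m)$ as the operator-space predual of $\ell_\infty^n(\ell_1^m)$ and closely parallel the analogous arguments used in the Bell scenario of \cite{JP2011,JPPVW2010}; once they are in place, the rest of the proof is a routine unfolding of definitions combined with the triangle inequality.
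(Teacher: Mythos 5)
Your proposal is correct and follows essentially the same route as the paper: the two left-hand inequalities via the $\epsilon/\pi$ duality (resp.\ the fact that incomplete POVMs induce complete contractions on $\ell_1^n(\ell_\infty^m)$), and the constants $16$ and $4$ via the fourfold positive decompositions on each tensor factor (resp.\ of the complete contraction into c.p.\ pieces), which is precisely the argument of \cite{JPPVW2010} that the paper invokes and omits. Two details worth tightening: the Wittstock-type decomposition must be applied blockwise to the restrictions $\ell_\infty^m\to B(H)$ (since $\ell_1^n(\ell_\infty^m)$ is not an operator system), with Haagerup's norm control giving $\Phi_j(e_x\otimes 1)\leq\un$ so that each piece really is an incomplete POVM, and the self-adjointness of $(\Phi_j\otimes\mathrm{id})(F)$ — hence the constant $4$ rather than $8$ — does require the $F_x^a$ to be Hermitian, as you correctly flag.
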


\begin{proof}
For the LHS bound, we will use the duality between injective and projective tensor product for finite dimensional Banach spaces. For a Hilbert space $H,$ we denote by $S_p(H)$ the $p$-th Schatten class for $H$, where $p$ is a number such that $p\geq 1$. Any element $\sigma\in \ell_\8^n(\ell_1^m)\otimes_\pi S_1(H_d)=\left(\ell_1^n(\ell_\8^m)\right)^*\otimes_\pi\left(B(H_d)\right)^*$
can be considered as a functional on $\ell_1^n(\ell_\8^m) \otimes_{\epsilon} B(H_d)$.
Its action on $F$ is given by
\beq
\langle F,\sigma\rangle=\sum_{x=1}^n\sum_{a=1}^m\Tr(F_x^a\sigma_x^a),
\eeq
where matrices $\sigma_x^a\in B(H_d)$ are determined by the unique decomposition $\sigma=\sum_{x,a}e_x'\otimes e_a\otimes \sigma_x^a$.
Observe that the action of a steering functional on an assemblage given by (\ref{e:action}) is a special case of above duality.
Given a Banach space $X,$ let $\mathbb{B}_X$ denote the unit ball of $X$.
Thus, by the duality, we have
\beq
\|F\|_{\ell_1^n(\ell_\8^m)\otimes_{\epsilon} B(H_d)}=\sup \left\{|\langle F,\sigma\rangle|: \; \sigma \in \mathbb{B}_{\ell_\8^n(\ell_1^m)\otimes_{\pi} S_1(H_d)} \right\}.
\eeq
Now, observe that
\begin{equation}
\label{e:inclusion}
\mathcal{L}^\mathrm{in}\subset\mathbb{B}_{\ell_\8^n(\ell_1^m)\otimes_{\pi} S_1(H_d)}.
\end{equation}
The first inequality in (\ref{eq:classical}) follows from inclusion (\ref{e:inclusion}).

For the quantum bound, we first recall the following fact from \cite{JPPVW2010}:
\begin{fact}
Given a set of incomplete POVMs  $\{E_x^a\}_{a=1,\ldots,m}$, $x=1,\ldots,n$, on $B(H)$, the operator $u:\ell_1^n(\ell_\8^m) \to B(H)$ defined by $u(e_x \otimes e_a') =E_x^a,$ is a completely contraction.
\end{fact}
The minimal tensor norm of $F$ can be expressed as follows (\cite{Pisier2003}):
\begin{equation}
\label{e:min}
\|F\|_{\ell_1^n(\ell_\8^m) \otimes_{\min} B(H_d)}=\sup_{H,u}  \|(u\otimes \mathrm{id})(F) \|_{B(H)\otimes_{\min}B(H_d)},
\end{equation}
where the $\sup$ is taken over all possible Hilbert spaces $H$ and completely contractions $u: \ell_1^n(\ell_\8^m) \to B(H)$.
Let $\sigma\in\mathcal{Q}^\mathrm{in}$, i.e. there is a Hilbert space $H$, a density matrix $\rho\in B(H)\otimes B(H_d),$ and incomplete POVMs $\{E_x^a\}$ on $B(H)$ such that $\sigma_x^a=\Tr_\mathrm{A}((E_x^a\otimes\un)\rho)$ for every $x,a$. Let $u$ be a map $u: \ell_1^n(\ell_\8^m) \to B(H)$ defined by $u(e_x\otimes e_a)= E_x^a$. By the aforementioned fact $u$ is a completely contraction. Thus, by following claim it is enough to show the first inequality in \eqref{eq:quantum}
\begin{equation}
\label{eq:1}
\langle F,\sigma\rangle = \Tr((u\otimes \mathrm{id})(F) \rho).
\end{equation}
 The second inequality in \eqref{eq:classical} and \eqref{eq:quantum} can be proved by using the same argument in the proof of Proposition \ref{prop:transfer}. Here we omit the details.
\end{proof}

The result which we will prove is
\begin{thm}\label{thm:largeviolation}
For every $n\in \mathbb{N},$ there exists an element $F =\sum_{x,a=1}^{n} (e_x \otimes e_a') \otimes F_x^a\in \ell_1^n(\ell_\8^n) \otimes B(H_n)$ such that
\beq \frac{\|F\|_{\min}}{\|F\|_{\epsilon}} \gtrsim \frac{\sqrt{n}}{\sqrt{\log n}}.\eeq
\end{thm}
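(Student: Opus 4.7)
The plan is to build $F$ by pulling back a carefully chosen element $G\in\ell_2^k\otimes B(H_n)$ (with $k=[\delta n]$) through the transfer map $V$ of Fact~\ref{fact:map}. The element $G$ will be tailored so that its norm as a bounded map $\ell_2^k\to B(H_n)$ is $O(1)$, while its completely bounded norm $C_k\to B(H_n)$ is of order $\sqrt{k}\asymp\sqrt{n}$. Since $V$ is bounded with norm $\lesssim\sqrt{\log n}$ and $V'$ is completely bounded into $R_k$ with constant norm, a gap of order $\sqrt{n}$ on the $G$-side will translate into the desired gap of order $\sqrt{n}/\sqrt{\log n}$ for $F$.

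Concretely, I set $k=[\delta n]$ and take
\[
G=\sum_{i=1}^{k}e_i\otimes |1\rangle\langle i|\in\ell_2^k\otimes B(H_n),\qquad F=(V\otimes\mathrm{id})(G)\in\ell_1^n(\ell_\8^n)\otimes B(H_n).
\]
For the upper bound on $\|F\|_{\epsilon}$ I view $V$ as a Banach space map and use metric mapping properties:
\[
\|F\|_{\ell_1^n(\ell_\8^n)\otimes_{\epsilon} B(H_n)}\leq \|V\|\cdot\|G\|_{\ell_2^k\otimes_{\epsilon} B(H_n)}\leq C\sqrt{\log n}\cdot\sup_{\|a\|_2\leq 1}\Big\|\sum_i a_i|1\rangle\langle i|\Big\|\lesssim \sqrt{\log n},
\]
since $\sum_i a_i|1\rangle\langle i|$ is a rank-one matrix with first row $(a_1,\ldots,a_k,0,\ldots,0)$ and operator norm $\|a\|_2$.

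For the lower bound on $\|F\|_{\min}$, I exploit $V'V=\mathrm{id}_{\ell_2^k}$, which gives $G=(V'\otimes\mathrm{id})(F)$. Using that $V'\colon\ell_1^n(\ell_\8^n)\to R_k$ is completely bounded with $\|V'\|_{cb}\lesssim 1$,
\[
\|G\|_{R_k\otimes_{\min}B(H_n)}\leq \|V'\|_{cb}\cdot\|F\|_{\ell_1^n(\ell_\8^n)\otimes_{\min} B(H_n)}\lesssim \|F\|_{\min}.
\]
A direct computation in the row Hilbert space gives $\|G\|_{R_k\otimes_{\min}B(H_n)}=\bigl\|\sum_i |1\rangle\langle i|\cdot|i\rangle\langle 1|\bigr\|^{1/2}=\|k|1\rangle\langle 1|\|^{1/2}=\sqrt{k}\gtrsim\sqrt{n}$. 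Combining the two bounds yields $\|F\|_{\min}/\|F\|_\epsilon\gtrsim\sqrt{n}/\sqrt{\log n}$, as required.

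The one place that needs care is matching the operator space structures in Fact~\ref{fact:map}: complete boundedness of $V'$ is only guaranteed into the row space $R_k$, not into $\ell_2^k$ with an arbitrary Hilbertian structure. Hence I must engineer $G=\sum e_i\otimes T_i$ so that the row quantity $\|\sum_iT_iT_i^*\|^{1/2}$ is of order $\sqrt{k}$ while $\sup_{\|a\|_2\leq 1}\|\sum a_iT_i\|$ stays $O(1)$. The asymmetric choice $T_i=|1\rangle\langle i|$ achieves exactly this: the matrices fill only the first row (so the Banach norm equals $\|a\|_2$), while $T_iT_i^*=|1\rangle\langle 1|$ for every $i$, so $\sum_iT_iT_i^*=k|1\rangle\langle 1|$ has norm $k$. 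The mirror choice $T_i=|i\rangle\langle 1|$ would produce the $\sqrt{k}$ gap in $C_k$ rather than $R_k$ and would not combine with Fact~\ref{fact:map} as stated.
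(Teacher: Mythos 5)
Your proposal is correct and is essentially the paper's own argument: your $F=(V\otimes\mathrm{id})(G)$ with $G=\sum_i e_i\otimes|1\rangle\langle i|$ is exactly the paper's element $F=(V\otimes W)(\sum_k e_k\otimes e_k)$ with $W(e_k)=|1\rangle\langle k|$, and both the $\epsilon$-upper bound via $\|V\|\lesssim\sqrt{\log n}$ and the $\min$-lower bound via $V'V=\mathrm{id}$, the cb-boundedness of $V'$ into $R_{[\delta n]}$, and the row-space computation $\|\sum_i T_iT_i^*\|^{1/2}=\sqrt{[\delta n]}$ coincide with the paper's proof. Your closing remark about why the asymmetric choice $T_i=|1\rangle\langle i|$ (rather than $|i\rangle\langle 1|$) is forced by the fact that $V'$ is only cb into the row space is a correct and worthwhile clarification of a point the paper leaves implicit.
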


Theorem \ref{thm:LargeViolation} follows from this theorem and proposition \ref{prop:minandepsilon}.\\

\begin{proof}
Define a map $W: \ell_2^{[\delta n]} \to B(H_n)$ by:
\begin{equation}
\label{e:W}
W(e_k)= |1\rangle \langle k|, k=1,\ldots, [\delta n].
\end{equation}
It is easy to check $W$ is a contraction, i.e. $\|W\|\leq 1.$

Combining with the fact \ref{fact:map}, we consider the element:
\begin{equation}
\label{e:F}
F=(V\otimes W)(a) \in \ell_1^n(\ell^n_\8) \otimes B(H_n),
\end{equation}
where $a= \sum_{k=1}^{[\delta n]} e_k \otimes e_k.$
On one hand, we have
\begin{equation}\label{eq:epsilon}
\|F\|_{\ell_1^n(\ell_\8^n) \otimes_{\epsilon} B(H_n) }=\|V\otimes W (a)\|_{\ell_1^n(\ell_\8^n) \otimes_{\epsilon} B(H_n)} \leq \|V\|\|W\|\|a\|_{\ell_2^{[\delta n]} \otimes_{\epsilon} \ell_2^{[\delta n]}} \lesssim \sqrt{\log n}.
\end{equation}
On the other hand, the formula \eqref{e:min} implies
\begin{eqnarray}\label{eq:min}
\lefteqn{\|F\|_{\ell_1^n(\ell_\8^n) \otimes_{\min} B(H_n)}  \gtrsim}\\
&\gtrsim& \|(V' \otimes id)(F)\|_{R_{[\delta n]} \otimes_{\min} B(H_n)}= \|(V' \otimes id)(V\otimes W)(a)\|_{R_{[\delta n]} \otimes_{\min} B(H_n)} \nonumber\\
& =&\|\sum_{k=1}^{[\delta n]} e_k \otimes |1\rangle\langle k|\|_{R_{[\delta n]} \otimes_{\min} B(H_n)}= \|\sum_k |1\rangle\langle k| k\rangle\langle 1| \|^{\frac{1}{2}} \gtrsim \sqrt{n}.\nonumber
\end{eqnarray}

Combining equations \eqref{eq:epsilon} and \eqref{eq:min} we get
\beq \frac{\|F\|_{\min}}{\|F\|_{\epsilon}} \gtrsim \frac{\sqrt{n}}{\sqrt{\log n}}\eeq.
\end{proof}

For the specific $F$ which was appeared in the above proof, we have following upper bound.
\begin{prop}\label{prop:upperbound}
The element $F=(V\otimes W)(a)\in \ell_1^n(\ell_{\infty}^n)\otimes B(H_n)$ defined in \eqref{e:F} 
verifies
\beq \|F\|_{\ell_1^n(\ell_{\infty}^n)\otimes_{\min} B(H_n)}\lesssim  \sqrt{n \log n}.\eeq
\end{prop}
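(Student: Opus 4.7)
The plan is to apply the duality formula \eqref{e:min} for the minimal tensor norm and to exploit the very special rank-one structure of $W$: since $W(e_k)=|1\rangle\langle k|$, every $W(e_k)$ is supported in the first row of $M_n$, which forces $(u\otimes\mathrm{id})(F)$ to behave like a row operator whose norm is easy to control directly.

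Fix a Hilbert space $H$ and a complete contraction $u: \ell_1^n(\ell_\infty^n)\to B(H)$. Using $a=\sum_k e_k\otimes e_k$ together with the definitions of $V$ and $W$,
$$(u\otimes\mathrm{id})(F) \;=\; \sum_{k=1}^{[\delta n]} S_k \otimes |1\rangle\langle k|,\qquad S_k:=u(V(e_k))\in B(H).$$
Viewed as an operator on $H\otimes H_n$, its range lies in $H\otimes\mathrm{span}\{|1\rangle\}$, so in block form with respect to the canonical basis of $H_n$ this element has a single nonzero block row, with entries $S_1,\ldots,S_{[\delta n]},0,\ldots,0$. Such a row operator has operator norm $\|\sum_k S_k S_k^*\|^{1/2}$, which by the triangle inequality in $B(H)$ is at most $\bigl(\sum_{k=1}^{[\delta n]}\|S_k\|^2\bigr)^{1/2}$.

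To control each $\|S_k\|$ I use only that a complete contraction is in particular a contraction, together with Fact \ref{fact:map}:
$$\|S_k\| \;=\; \|u(V(e_k))\| \;\leq\; \|V(e_k)\|_{\ell_1^n(\ell_\infty^n)} \;\leq\; \|V\|\,\|e_k\|_2 \;\leq\; C\sqrt{\log n}.$$
Substituting and using $[\delta n]\leq n$ yields $\|(u\otimes\mathrm{id})(F)\| \leq \sqrt{[\delta n]}\cdot C\sqrt{\log n} \lesssim \sqrt{n\log n}$, and taking the supremum over $u$ completes the argument. There is no serious obstacle: the whole proof hinges on the simple observation that the common rank-one form $|1\rangle\langle k|$ of the matrices $W(e_k)$ collapses $(u\otimes\mathrm{id})(F)$ into a row operator, after which only the operator norm bound on $V$ (not any completely bounded information) is needed. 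It is also worth noting that this crude bound matches the lower bound $\|F\|_{\min}\gtrsim\sqrt{n}$ from \eqref{eq:min} up to the $\sqrt{\log n}$ factor, which is precisely the loss inherent in the Pisier-type map $V$ supplied by Fact \ref{fact:map}.
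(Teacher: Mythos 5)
Your argument is correct, and it is genuinely different from the one in the paper. The paper factorizes through the operator Hilbert space: it computes $\|W:OH_{[\delta n]}\to B(H_n)\|_{cb}\lesssim n^{1/4}$ via Pisier's formula, invokes the nontrivial estimate $\|V:OH_{[\delta n]}\to\ell_1^n(\ell_\8^n)\|_{cb}\lesssim n^{1/4}\sqrt{\log n}$ from \cite{JP2011}, and then bounds $\|F\|_{\min}$ by $\|V\|_{cb}\|W\|_{cb}\|a\|_{OH\otimes_{\min}OH}$. You instead work directly with the supremum formula \eqref{e:min}: for a complete contraction $u$ you write $(u\otimes\mathrm{id})(F)=\sum_k S_k\otimes|1\rangle\langle k|$ with $S_k=u(V(e_k))$, observe that the common rank-one form $|1\rangle\langle k|$ makes this a single block row so that its norm is $\bigl\|\sum_k S_kS_k^*\bigr\|^{1/2}\leq\bigl(\sum_k\|S_k\|^2\bigr)^{1/2}$, and then you only need the Banach-space bound $\|V\|\leq C\sqrt{\log n}$ from Fact \ref{fact:map} together with the trivial fact that a complete contraction is a contraction. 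Both routes give $\sqrt{n\log n}$. Your computation of the row norm is correct ($TT^*=\sum_k S_kS_k^*\otimes|1\rangle\langle1|$), and the use of \eqref{e:min} is legitimate since the supremum there dominates the minimal norm. What your approach buys is economy and transparency: it avoids the $OH$ machinery and the completely bounded estimate on $V$ entirely, making clear that for this particular $F$ only the bounded norm of $V$ matters because the whole Alice side is absorbed into $u$ while Bob's side is a row. What the paper's approach buys is generality: the $OH$ factorization does not depend on the special row structure of $W$ and would survive replacing $W$ by any map with controlled cb norm from $OH_{[\delta n]}$ into $B(H_n)$, and it displays symmetrically how each factor contributes an $n^{1/4}$.
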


\begin{proof}
Let us consider the map $W:\ell_2^{[\delta n]} \to B(H_n)$ defined in \eqref{e:W}.
By the result of Pisier (\cite{Pisier2003}), we have
\begin{equation}\begin{split}
\|W: OH_{\delta n} \to B(H_n)\|_{cb}^2 &= \left\|\sum_{k=1}^{[\delta n]} |1\rangle\langle k| \otimes |1\rangle\langle k|\right\|_{B(H_n)\otimes_{\min} B(H_n)}\\
& \leq\left\|\sum_{k=1}^{[\delta n]} |1\rangle\langle k|k\rangle\langle 1|\right\|^{\frac{1}{2}} \; \left\|\sum_{k=1}^{[\delta n]} |1\rangle\langle k|1\rangle\langle k|\right\|^{\frac{1}{2}}  \\
& \lesssim \sqrt{n}.
\end{split}
\end{equation}
Hence $\|W: OH_{[\delta n]} \to B(H_n)\|_{cb} \lesssim n^{\frac{1}{4}}.$ On the other hand, we have following fact be showed in \cite{JP2011}:
\beq \|V: OH_{[\delta n]} \to \ell_1^n(\ell_\8^n)\|_{cb}\lesssim n^{\frac{1}{4}} \sqrt{\log n}.\eeq
Therefore, we obtain:
\begin{equation}\begin{split}
\|F\|_{\ell_1^n(\ell_{\infty}^n)\otimes_{\min} B(H_n)} &=\left\|(V\otimes W)(\sum_k e_k \otimes e_k)\right\|_{\ell_1^n(\ell_{\infty}^n)\otimes_{\min} B(H_n)} \\
& \lesssim \sqrt{n\log n} \left\|\sum_k e_k \otimes e_k\right\|_{OH_{[\delta n]}\otimes_{\min} OH_{[\delta n]}}\\
& =\sqrt{n\log n} \left\|\sum_k e_k\otimes e_k\right\|_{\ell_2^{[\delta n]} \otimes_{\epsilon} \ell_2^{[\delta n]}} = \sqrt{n\log n}.
\end{split}\end{equation}
\end{proof}

Since our work is an adaptation of \cite{JP2011} in steering scenario,
 it is natural for us to provide the following two results:
\vspace{2mm}

{\bf Explicit form of the violation.}
Let $\epsilon_{x,a}^k, x,a,k=1,\ldots,n$ be independent Bernoulli sequences and let $K$ be a positive constant. Then we define:

\begin{enumerate}[{\rm i)}]
\item Steering functional $F_x^a \in B(H_{n+1})$:
        \begin{equation}\label{eq:steering} F_x^a =
 \left \{ \begin{split}
 &\frac{1}{n}\sum_{k=2}^{n+1} \epsilon_{x,a}^{k-1} |1\rangle\langle k| & \quad\quad x,a=1,\ldots,n,\\
 &0 & \quad \quad a=n+1.
 \end{split} \right.
 \end{equation}

\item POVMs measurements (\cite{JP2011}) $\{E_x^a\}_{x,a=1}^{n,n+1}$ in $B(H_{n+1})$ as
       \beq E_x^a =
 \left \{ \begin{split}
 &\frac{1}{n K}
 \begin{pmatrix}
 1 & \epsilon_{x,a}^1 & \cdots    & \epsilon_{x,a}^n \\
 \epsilon_{x,a}^1 & 1 & \cdots    & \epsilon_{x,a}^1\epsilon_{x,a}^n\\
 \vdots & \vdots &  & \vdots\\
 \epsilon_{x,a}^n & \epsilon_{x,a}^n\epsilon_{x,a}^1 & \cdots   & 1
 \end{pmatrix}.& \quad\quad a=1,\ldots,n,\\
 &1-\sum_{a=1}^n E_x^a & \quad \quad a=n+1
 \end{split} \right.
 \eeq
 for $x=1,\ldots,n.$
\item States: If $(\alpha_i)_{i=1}^{n+1}$ is a decreasing and positive sequence then set
\beq |\varphi_{\alpha} \rangle = \sum_{i=1}^{n+1} \alpha_i |ii\rangle.\eeq
\end{enumerate}
For this steering functional, the LHS and quantum bound can be calculated easily through the argument of \cite{JP2011}:

\begin{equation}\label{eq:5}
B_C(F)=\sup\left\{ \left|\sum_{x=1}^n\sum_{a=1}^{n+1} Tr (F_x^a \sigma_x^a)\right|: \; \sigma_x^a \in \mathcal{L}\right\} \leq C \sqrt{\log n},
\end{equation}
and
\begin{equation}\label{eq:6}
B_Q(F)\geq \sum_{x=1}^n \sum_{a=1}^{n+1} \Tr (F_x^a \Tr_A ( E_x^a \otimes \un | \varphi_\alpha\rangle \langle \varphi_\alpha|)) = \frac{1}{K} \alpha_1 \sum_{k=2}^{n+1} \alpha_k.
\end{equation}

For \eqref{eq:5}, define two maps $\widetilde{V}: \ell_2^{n+1} \to \ell_1^n (\ell_\8^{n+1})$ and $\widetilde{W}: \ell_2^{n+1} \to B(H_{n+1})$ as follows:
 \beq \widetilde{V}(e_k)=
 \left \{ \begin{split}
 &0 & k=1,\\
 &\frac{1}{n}\sum_{x=1}^n\sum_{a=1}^{n} \epsilon_{x,a}^{k-1}e_x \otimes e_a' & \quad\quad k=2,\ldots,n+1,
 \end{split} \right.
 \eeq
and
 \beq \widetilde{W}(e_k)=
 \left \{ \begin{split}
 &0 & k=1,\\
 &|1\rangle\langle k| & \quad\quad k=2,\ldots,n+1.\\
 \end{split} \right.
 \eeq
By \cite[Lemma 3.5]{JP2011}, we get
\begin{equation}\begin{split}
\mathbb{E}\|\widetilde{V}: \ell_2^{n+1} \to \ell_1^n(\ell_{\8}^{n+1})\| \leq C \sqrt{\log n}.
\end{split}
\end{equation}
Then by Chebyshev's inequality, with ``high probability" we can choose $\{\epsilon_{x,a}^k\}$ such that: $\|\widetilde{V}\| \leq C\sqrt{\log n}.$ Moreover, it is easy to see the map $\widetilde{W}: \ell_2^{n+1} \to B(H_{n+1})$ is a contraction, i.e. $\|\widetilde{W}\| \leq 1.$\\
Hence, by Proposition \ref{prop:minandepsilon}, we have
\begin{equation}\begin{split}
B_C(F) &\leq \left\|\sum_{x=1}^n \sum_{a=1}^{n+1} e_x \otimes e_a' \otimes F_x^a\right\|_{\ell_1^n(\ell_\8^{n+1}) \otimes_{\epsilon} B(H_{n+1})}  \\
&= \left\|\widetilde{V} \otimes \widetilde{W} (\sum_{k=1}^{n+1} e_k \otimes e_k)\right\|_{\ell_1^n(\ell_\8^{n+1}) \otimes_{\epsilon} B(H_{n+1})}  \leq C \sqrt{\log n}.
\end{split}
\end{equation}

For given $\alpha\in (0,1),$ let us consider $|\varphi_\alpha\rangle = \alpha|11\rangle + \sum_{i=2}^{n+1}\sqrt{\frac{1-\alpha^2}{n}} |ii\rangle$. It follows from the above result that $B_Q(F)\geq \frac{1}{K}\alpha\sqrt{1-\alpha^2} \sqrt{n} \gtrsim \sqrt{n}$. So, we have constructed explicitly a steering functional $F$ such that $LV(F)\gtrsim \frac{\sqrt{n}}{\sqrt{\log n}}$. 
Let us mention that this unbounded largest violation is obtained by a non maximally entangled state.

\begin{remark}
This construction is explicit but also probabilistic. It does not guarantee that a given functional will yield unbounded largest violation. It happens with high probability.
\end{remark}

Another natural result is:
\vspace{2mm}

{\bf Larger steering violation by non maximally entangled state.}
Let $\rho= | \psi_d \rangle \langle \psi_d |$ be the $d$-dimensional maximally entangled state, where $ |\psi_d \rangle = \frac{1}{\sqrt{d}} \sum_{i=1}^d |ii\rangle$. In \cite[Theorem 5.1]{JP2011} the authors provide an example of a Bell functional which gives Bell violations of order $\frac{\sqrt{n}}{\log n},$ but only bounded violations can be obtained by any maximally entangled state. It is not surprising that we have similar conclusion in steering scenario. Following notion is crucial \cite{JP2011,JP1995}: Given two operator spaces $E$ and $F$, for any $a \in E \otimes F$ we define its $\psi-\min$ norm:
\beq \|a\|_{\psi-\min} = \sup | \langle \psi_d | (u\otimes v) (a)| \psi_d \rangle|,\eeq
where the supremum runs over all $d$ and all completely contractions $u: X\to B(H_d)$ and $v: Y\to B(H_d)$.

The next lemma follows directly from Proposition \ref{prop:minandepsilon} (or see \cite{JPPVW2010}).
\begin{lem}\label{lem:modifiedminnorm}
Given an element $F=\sum_{x=1}^n\sum_{a=1}^{m} (e_x \otimes e_a') \otimes F_x^a \in \ell_1^n(\ell_\8^m) \otimes B(H_d),$ we have:
\beq \sup_{\Theta_{\max}\in\mathcal{Q}_{| \psi_d\rangle}} |\langle F, \Theta_{\max}\rangle| \lesssim \|F\|_{\psi-\min},\eeq
where $
\mathcal{Q}_{| \psi_d\rangle} = \{ (\sigma_x^a)=(Tr_A (E_x^a \otimes \un_B) |\psi_d\rangle\langle\psi_d|): \mbox{{\em $\{ E_x^a\}_{x,a=1}^{n,m}$ is a POVMs}}\}$, i.e, $\mathcal{Q}_{| \psi_d\rangle}$ is the set of all assemblages which are constructed by the $d$-dimensional maximally entangled state. 
\end{lem}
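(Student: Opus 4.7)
The plan is to exhibit, for an arbitrary POVM-generated assemblage on $|\psi_d\rangle$, a pair of completely contractive maps $u$ and $v$ such that the steering pairing $\langle F,\Theta_{\max}\rangle$ coincides with the scalar $\langle\psi_d|(u\otimes v)(F)|\psi_d\rangle$. Once this identification is made, the desired inequality follows immediately from the definition of $\|\cdot\|_{\psi-\min}$, after which one takes the supremum over the POVMs defining $\mathcal{Q}_{|\psi_d\rangle}$.

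First I would fix a POVM $\{E_x^a\}$ on $B(H_d)$ and explicitly compute the partial trace defining the assemblage. Using $|\psi_d\rangle\langle\psi_d|=\tfrac{1}{d}\sum_{i,j}|i\rangle\langle j|\otimes|i\rangle\langle j|$, a direct calculation gives $\sigma_x^a=\tfrac{1}{d}(E_x^a)^T$, so that
\beqn
\Tr(F_x^a\sigma_x^a)=\frac{1}{d}\Tr\bigl(F_x^a(E_x^a)^T\bigr)=\frac{1}{d}\sum_{i,j}(F_x^a)_{ij}(E_x^a)_{ij}.
\eeqn
On the operator-space side, I invoke the standard identity $\langle\psi_d|(A\otimes B)|\psi_d\rangle=\tfrac{1}{d}\Tr(AB^T)$ to bring the $\psi$-min pairing into the same algebraic form.

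Next I would choose the auxiliary maps. Define $u:\ell_1^n(\ell_\8^m)\to B(H_d)$ by $u(e_x\otimes e_a')=E_x^a$; the fact recalled in the proof of Proposition \ref{prop:minandepsilon} ensures that $u$ is a complete contraction. Let $v:B(H_d)\to B(H_d)$ be the identity, which is trivially a complete contraction. Combining the two computations above yields
\beqn
\langle\psi_d|(u\otimes v)(F)|\psi_d\rangle=\sum_{x,a}\frac{1}{d}\Tr\bigl(E_x^a(F_x^a)^T\bigr)=\sum_{x,a}\Tr(F_x^a\sigma_x^a)=\langle F,\Theta_{\max}\rangle.
\eeqn
Taking absolute values and using the definition of $\|\cdot\|_{\psi-\min}$, we get $|\langle F,\Theta_{\max}\rangle|\leq\|F\|_{\psi-\min}$, and passing to the supremum over the choice of POVM concludes the argument.

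There is no substantial obstacle here beyond being careful with the transpose appearing in the partial-trace identity; the essential point is that evaluating a steering functional on an assemblage coming from $|\psi_d\rangle$ has exactly the algebraic shape of a $\psi$-min pairing once one recognises that the $B(H_d)$-leg contributes only through the identity map. The $\lesssim$ in the statement is in fact realised with constant $1$ by this construction, matching the convention used elsewhere in the paper.
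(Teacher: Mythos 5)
Your argument is correct and is essentially the paper's own route: the paper deduces the lemma directly from the mechanism in Proposition \ref{prop:minandepsilon}, namely that the POVM induces a complete contraction $u(e_x\otimes e_a')=E_x^a$ and that $\langle F,\sigma\rangle=\Tr((u\otimes \mathrm{id})(F)\rho)$, which for $\rho=|\psi_d\rangle\langle\psi_d|$ is exactly the $\psi$-min pairing with $v=\mathrm{id}$. Your explicit entrywise computation (with the transpose bookkeeping) just spells this identification out, and indeed yields the bound with constant $1$.
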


Recall the following fact in \cite[Theorem 5.1]{JP2011}:
\begin{fact}
There are linear maps $S: R_n \cap C_n \to \ell_1^k (\ell_\8^{Dn})$ and $S^*:  \ell_1^k (\ell_\8^{Dn}) \to R_n \cap C_n$ such that
\beq S^* S =\mathrm{id}_{\ell_2^n}, \quad  \|S^*\|_{cb}\leq C, \quad\mbox{and} \quad \|S\|_{cb}\leq C \sqrt{\log n},\eeq where $k\leq 2^{D^2 n^2}.$
\end{fact}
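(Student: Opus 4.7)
The plan is to construct $S$ and $S^*$ via a probabilistic Bernoulli construction, in the spirit of Fact 3.2 but adapted to the smaller Hilbertian operator space $R_n \cap C_n$. The core technical input should be a non-commutative Khintchine / Pisier--Haagerup inequality, which is precisely what captures the $R_n + C_n = (R_n \cap C_n)^*$ norm of Bernoulli sums of matrices and is therefore tailor-made for handling $R_n \cap C_n$ simultaneously from the row and column sides.

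Concretely, I would first enumerate all $Dn \times Dn$ sign matrices $\epsilon^{(1)},\ldots,\epsilon^{(k)}$ (of which there are exactly $k = 2^{D^2 n^2}$), and to each $\epsilon^{(l)}$ attach a map $T_l \colon R_n \cap C_n \to \ell_\infty^{Dn}$ built from its rows or columns after a suitable rescaling. Stacking these with the correct $\ell_1$-normalization produces a candidate $S \colon R_n \cap C_n \to \ell_1^k(\ell_\infty^{Dn})$, and the retraction $S^*$ is then defined by the dual averaging procedure, with the normalizations tuned so that summing over $l$ recovers $\mathrm{id}_{\ell_2^n}$ exactly --- this is what forces the enumeration to be exhaustive and explains the specific exponent $D^2 n^2$.

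Next I would establish the two cb-bounds separately. The bound $\|S^*\|_{cb} \leq C$ is the easier direction: since $\ell_1^k(\ell_\infty^{Dn})$ is generated at the matrix level by the canonical basis elements, controlling $(\mathrm{id}_{M_N} \otimes S^*)$ reduces via Pisier--Haagerup to bounding the row and column quantities $\|(\sum_l a_l a_l^*)^{1/2}\|$ and $\|(\sum_l a_l^* a_l)^{1/2}\|$ of the image matrices, and both are bounded by construction thanks to the sign orthogonality. For the embedding bound $\|S\|_{cb} \leq C\sqrt{\log n}$, I would evaluate $(\mathrm{id}_{M_N} \otimes S)$ on a general element of $M_N(R_n \cap C_n)$ and combine a subgaussian tail estimate for operator-valued Rademacher chaos with a union bound over a $\tfrac12$-net in the unit ball of $M_N(R_n \cap C_n)$; the $\sqrt{\log n}$ factor emerges from the interaction between the subgaussian exponent and the logarithm of the relevant net size, in parallel with the analogous estimate in Fact 3.2.

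The main obstacle is precisely the two-sided nature of $R_n \cap C_n$: unlike a pure row or column space (as in Fact 3.2, whose cb-retraction lands in $R_{[\delta n]}$ alone), one must propagate both the $a_l a_l^*$ and the $a_l^* a_l$ estimates through the random construction, so the Pisier--Haagerup inequality has to be invoked in its full strength and both sides of the row/column duality must be tracked simultaneously. A secondary subtlety is achieving $S^* S = \mathrm{id}_{\ell_2^n}$ \emph{exactly} rather than approximately, which rules out sub-exponential choices of $k$ and forces the enumeration over all $2^{D^2 n^2}$ sign patterns; making the normalization consistent with both cb-bounds while preserving the exact identity is where the bookkeeping becomes delicate.
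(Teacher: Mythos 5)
First, a point of comparison: the paper does not prove this statement at all --- it is quoted directly from \cite[Theorem 5.1]{JP2011} --- so the only meaningful benchmark is the construction in that source. Your outline (average over all sign matrices, use the duality $R_n\cap C_n=(R_n+C_n)^*$ and Khintchine-type inequalities, obtain the retraction bound from the row/column sides separately) is broadly in the spirit of that construction, and the identification of $k\leq 2^{D^2n^2}$ with the number of sign patterns is the right bookkeeping.

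However, the step that carries all the content, $\|S\|_{cb}\leq C\sqrt{\log n}$, is not established by the argument you propose. Once you enumerate \emph{all} sign matrices and average, $S$ is a fixed deterministic map, so there is no residual randomness to which a ``subgaussian tail estimate for operator-valued Rademacher chaos'' can be applied; and even if you randomized the construction, a union bound over a $\tfrac12$-net of the unit ball of $M_N(R_n\cap C_n)$ cannot yield a completely bounded estimate, because the cb norm requires uniformity in $N$ while the net cardinality (and hence the loss from the union bound) grows with $N$ --- this is precisely the gap between $\|\cdot\|$ and $\|\cdot\|_{cb}$ that the statement is about. What is actually needed is a deterministic operator-space estimate: in \cite{JP2011} the factor $\sqrt{\log n}$ enters through Chevet-type bounds for Bernoulli/Gaussian averages with values in $\ell_\infty^{Dn}$ (the expected maximum of $Dn$ sign sums), combined with the noncommutative Khintchine identification of $R_n+C_n$ with the span of Rademachers in $L_1$, all applied at the level of operator-space identities rather than via nets and tails. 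Two smaller remarks: the bound $\|S^*\|_{cb}\leq C$ should be justified through the little Grothendieck theorem, i.e.\ the equivalence of the cb norm of a map from a commutative C*-algebra into $R_n$ (resp.\ $C_n$) with its $2$-summing norm, rather than by appealing to ``sign orthogonality'' at the matrix level; and exactness of $S^*S=\mathrm{id}_{\ell_2^n}$ does not in fact force exhaustive enumeration --- with fewer, randomly chosen sign patterns one gets $S^*S$ invertible with uniformly bounded inverse and may replace $S^*$ by $(S^*S)^{-1}S^*$ at the cost of a constant --- so that part of your motivation, while harmless, is not a genuine constraint.
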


Similarly to the proof of Theorem \ref{thm:largeviolation}, we can 
construct an element: $F=(S \otimes W)(\sum_k e_k \otimes e_k) \in \ell_1^k(\ell_\8^{Dn}) \otimes B(H_n)$ satisfying:
\begin{equation}\label{eq:3}
\|F\|_{\epsilon} \lesssim \sqrt{\log n}  \quad \mbox{and} \quad \|F\|_{\min} \gtrsim \sqrt{n}.
\end{equation}
Moreover,
\begin{equation}\label{eq:4}\begin{split}
\|F\|_{\psi-\min}&=\left\|S\otimes W (\sum_k e_k \otimes e_k)\right\|_{\psi-\min}\leq \|S\|_{cb} \|W\|_{cb}\left\|\sum_k e_k \otimes e_k\right\|_{\psi-\min}\\
&\leq C \sqrt{\log n} \left\|\sum_k e_k \otimes e_k\right\|_\epsilon \lesssim \sqrt{\log n}.
\end{split}\end{equation}

Now, we can represent $F$ as $F= \sum_{x,a} e_x \otimes e_a' \otimes F_x^a,$ 
and define $\widetilde{F}_x^a\in \mathbb{M}_{n+1},$ such that the left-top $n\times n$ corner of $\widetilde{F}_x^a$ is $F_x^a$ and other coefficients of $\widetilde{F}_x^a$ are zeros.
Equations \eqref{eq:3}, \eqref{eq:4} and Lemma \ref{lem:modifiedminnorm} 
lead to the conclusion that there exists a steering functional \beq \widetilde{F}=\{\widetilde{F}_x^a \in B(H_{n+1}): x=1,\ldots, 2^{n^2},\; a=1,\ldots,n+1\},\eeq
such that:
 \begin{enumerate}[{\rm i)}]
\item $LV(\widetilde{F}) \gtrsim \frac{\sqrt{n}}{\sqrt{\log n}},$
\item $\sup_{\Theta_{\max}\in\mathcal{Q}_{| \psi_n\rangle}} |\langle \widetilde{F}, \Theta_{\max}\rangle| \lesssim  \sqrt{\log n}.$
\end{enumerate}

This steering functional $\widetilde{F}$ has an unbounded largest violation of order $\frac{\sqrt{n}}{\sqrt{\log n}},$ but the unbounded largest violation can never been obtained by the maximally entangled state.

\begin{remark}\label{rmk:Jungework}
Since in steering scenario we consider the assemblages instead of joint probabilities, 
the algebraic tensor product we consider is $\ell_1^n(\ell_\8^n) \otimes B(H_n).$ But in some sense, $B(H_n)$ is more friendly compare to $\ell_1^n(\ell_\8^n)$. That is the reason why we can easily apply Junge-Palazuelos's approach \cite{JP2011} to steering scenario.
\end{remark}

\subsection{Steering violation by partially entangled states including PPT states}
Here we will consider the role of the partially entangled state in violation of steering inequalities.
Recall that the $n$-dimensional pure partially entangled state is of the form
$|\psi_\alpha \rangle = \sum_{i=1}^n \alpha_i |ii\rangle,$ where $\alpha_i > 0$ and $\sum_i \alpha_i^2 =1.$
We have shown in Subsection 2.4 that it is possible to obtain
an unbounded largest violation of the order $\frac{\sqrt{n}}{\sqrt{\log{n}}}$
by means of a partially entangled state (for the steering functional given in \eqref{eq:steering}).

For any state $\rho$ and a steering functional $F=(F_x^a)$ let us consider
the quantum bound $B_{Q_{\rho}}(F)$ obtained by means of a partially entangled state $|\psi_\alpha\rangle$, i.e.
\begin{equation*}
B_{Q_\rho}(F)=\sup\Big\{\Big|\sum_{x,a}\Tr(F_x^a\Tr_A((E_x^a\otimes\un)\rho)\Big|:\,\mbox{$(E_x^a)_a$ are POVMs for any $x$}\Big\}.
\end{equation*}
Given a partially entangled state $|\psi_\alpha\rangle$ for any steering functional $F=(F_x^a)_{x,a=1,\ldots,n}$ we have
\begin{equation}\begin{split}
B_{Q_{|\psi_\alpha\rangle}}(F) & = \sup \Big\{ \Big|\sum_{x,a=1}^n \Tr (F_x^a \Tr_A ((E_x^a \otimes \un) |\psi_\alpha\rangle\langle \psi_\alpha|))\Big|: E_x^a \; \text{be POVMs}\Big\}\\
& = \sup \Big\{\Big| \sum_{x,a} \sum_{i,j} \alpha_i \alpha_j \langle j| E_x^a |i\rangle \langle j|F_x^a| i \rangle\Big|: E_x^a \Big\}\\
& \leq \sum_{i,j} \alpha_i \alpha_j \sup \Big\{ |\sum_{x,a} \Tr (E_x^a |i\rangle\langle j|) \Tr (F_x^a |i\rangle\langle j|)|: E_x^a \Big\}\\
& \leq \sum_{i,j} \alpha_i \alpha_j \sup\{ \|\sum_{x,a} E_x^a \otimes F_x^a \|_{\mathbb{M}_n \otimes_\epsilon \mathbb{M}_n} : E_x^a \}\\
& \leq \sum_{i,j} \alpha_i \alpha_j \|\sum_{x,a} e_x \otimes e_a \otimes F_x^a \|_{\ell_1^n(\ell_\8^n) \otimes_\epsilon \mathbb{M}_n}.
\end{split}
\end{equation}
The second inequality 
follows from the fact that: $\Tr(\cdot \; |i\rangle\langle j|) \in \mathbb{B}_{S_1^n}$,
where $\mathbb{B}_X$ denotes the unit ball in a norm space $X$.
Now, for any $\lambda\in[0,1]$ let us 
consider the following density matrix
$$\rho_\lambda= (1-\lambda) \frac{\un}{n^2} + \lambda |\psi_\alpha\rangle\langle \psi_\alpha|.$$
By the preceding discussion, the quantum bound 
$B_{Q_{\rho_\lambda}}(F),$ is bounded by
\beq (1-\lambda+ \lambda \sum_{i,j} \alpha_i \alpha_j)\|\sum_{x,a} e_x \otimes e_a \otimes F_x^a \|_{\ell_1^n(\ell_\8^n) \otimes_\epsilon \mathbb{M}_n}.\eeq
Thus,
\begin{equation}\label{eq:8a}
B_{Q_{\rho_\lambda}}(F) \lesssim (1-\lambda+ \lambda \sum_{i,j} \alpha_i \alpha_j) B_C(F).
\end{equation}
In \cite{Pusey2013}, the author presented a stronger version of Peres conjecture: ``PPT states can not violate steering inequalites, i.e, the assemblages obtained by measuring them always have LHS models." The conjecture has been disproved in
\cite{MGHG2014}. However, one can still ask, whether PPT states can exhibit unbounded violation. In the sequel 
we will consider two classes
of PPT states, and show that they allow only bounded steering violation.

Firstly, let us consider 
PPT states among states $\rho_\lambda$ for a given partially entangled state $|\psi_\alpha\rangle$.
The partial transpose of $\rho_\lambda$ has the following eigenvalues
 \begin{equation}
 \left \{ \begin{split}
 &\frac{1-\lambda}{n^2} \pm \lambda \alpha_i \alpha_j & \quad\quad i \neq j,\\
 &\frac{1-\lambda}{n^2} +\lambda \alpha_i^2 & i=1,\ldots, n.
 \end{split} \right.
 \end{equation}
Thus $\rho_\lambda$ is a PPT state if and only if $\lambda \leq \min\{ \frac{1}{1+ n^2 \alpha_i \alpha_j}: i \neq j\}.$ From the equation \eqref{eq:8a}, we have
\begin{equation}\begin{split}
B_{Q_{\rho_\lambda}}(F) & \lesssim \left(1-\lambda+ \lambda \sum_{i,j} \alpha_i \alpha_j\right) B_C(F)\\
& \leq \left(1+ \min\left\{ \frac{n-1}{1+ n^2 \alpha_i\alpha_j}: i\neq j\right\}\right) B_C(F)\\
& \leq \left(1+ \frac{n-1}{1+ n^2 \alpha \sqrt{\frac{1-\alpha^2}{n-1}}} \right) B_C(F),
\end{split}\end{equation}
where $\alpha=\max_i \{\alpha_i\}.$ It follows from $\sum_i \alpha_i^2=1$ that $\alpha \geq \frac{1}{\sqrt{n}}.$ Thus $B_{Q_\lambda}(F) \lesssim B_C(F).$
Now we can make following conclusion:
For PPT states $\rho_\lambda= (1-\lambda) \frac{\un}{n^2} + \lambda |\psi_\alpha\rangle\langle \psi_\alpha|, \lambda \leq \min\{ \frac{1}{1+ n^2 \alpha_i \alpha_j}: i \neq j\},$ and any given steering functional $F$, the quantum bound obtained by $\rho_\lambda$ is bounded by the LHS bound up to an universal constant.

In \cite{CK2006}, D. Chru\'{s}ci\'{n}ski and A. Kossakowski introduced another class of PPT states. It is invariant under the maximal commutative subgroup of $\mathbb{U}(n),$ and it includes the previous isotropic states $\rho_\lambda$. Briefly speaking, they considered following two classes of PPT states:
 \begin{enumerate}[{\rm i)}]
\item Isotropic-like states, $\rho= \sum_{i,j=1}^n a_{ij}|ii\rangle\langle jj| + \sum_{i\neq j=1}^n c_{ij} |ij\rangle\langle ij|, \;\; 	(a_{ij})_{i,j}\geq 0, c_{ij} \geq 0, c_{ij}c_{ji}-|a_{ij}|^2 \geq 0, \sum_{i=1}^n a_{ii} + \sum_{i\neq j=1}^n c_{ij}=1 ;$
\item Werner-like state, $\widetilde{\rho}= \sum_{i,j=1}^n b_{ij} |ij\rangle\langle ji| + \sum_{i\neq j=1}^n c_{ij}|ij \rangle\langle ij|, \;\; (b_{ij})_{i,j}\geq 0, c_{ij}\geq 0, c_{ij} c_{ji}-|b_{ij}|^2 \geq 0, \sum_{i=1}^n b_{ii} + \sum_{i\neq j=1}^n c_{ij}=1.$
\end{enumerate}

Now we will prove that for any steering functional the largest violation obtained by means of states from the above two classes of PPT states is bounded. To see this, we will use the following proposition, which is an analogue of \cite[Theorem 2.1]{P2012}.

\begin{prop}
Given an $n$-dimensional bipartite state $\rho \in S_n^1 \otimes S_n^1$ for any steering functional $(F_x^a)_{x,a=1,\ldots,n},$ we have
\beq  B_{Q_\rho}(F) \leq \|\rho\|_{S_n^1\otimes_{\pi} S_n^1}  B_C(F).\eeq
\end{prop}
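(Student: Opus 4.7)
The plan is to exploit the projective decomposition of $\rho$ to reduce $B_{Q_\rho}(F)$ to a sum of ``product state'' contributions, each of which is controlled by $B_C(F)$ because a product state cannot steer. As a first step, I rewrite the quantum bound in a convenient form: for any incomplete POVMs $(E_x^a)_{x,a}$, cyclicity of the partial trace gives
\[
\sum_{x,a}\Tr\bigl(F_x^a\Tr_A((E_x^a\otimes\un)\rho)\bigr)=\Tr(G_E\,\rho),\qquad G_E:=\sum_{x,a}E_x^a\otimes F_x^a,
\]
so $B_{Q_\rho}(F)=\sup_E|\Tr(G_E\rho)|$. Now fix $\eta>0$ and choose a projective decomposition $\rho=\sum_k A_k\otimes B_k$ with $\sum_k\|A_k\|_1\|B_k\|_1\le\|\rho\|_{S_n^1\otimes_\pi S_n^1}+\eta$. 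Linearity and the triangle inequality reduce the task to establishing, for arbitrary $A,B\in S_n^1$ and any incomplete POVM $(E_x^a)$,
\[
\Bigl|\sum_{x,a}\Tr(E_x^aA)\,\Tr(F_x^aB)\Bigr|\le\|A\|_1\,\|B\|_1\,B_C(F).
\]

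The key observation is that \emph{product states do not steer}. When $A$ and $B$ are density matrices, the family $\sigma_x^a:=\Tr(E_x^aA)\,B$ has the form $p(a|x)\,B$ where $p(a|x):=\Tr(E_x^aA)\ge 0$ satisfies $\sum_ap(a|x)=\Tr((\sum_aE_x^a)A)\le 1$; this is an incomplete LHS assemblage with a single hidden state $B$, so
\[
\Bigl|\sum_{x,a}\Tr(E_x^aA)\,\Tr(F_x^aB)\Bigr|=\Bigl|\sum_{x,a}\Tr(F_x^a\sigma_x^a)\Bigr|\le B_C(F).
\]
For Hermitian $A,B$ of arbitrary trace norm, I would split $A=A^+-A^-$ and $B=B^+-B^-$ with $\|A^+\|_1+\|A^-\|_1=\|A\|_1$ (and analogously for $B$), expand the bilinear expression into four signed terms each of product-density type after normalisation, and apply the previous estimate to each; summing with the triangle inequality yields the bound with exact constant $1$. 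For general complex $A,B$, I would further decompose into Hermitian real and imaginary parts $A=\tfrac12(A+A^*)+\tfrac{1}{2i}(A-A^*)$ (each of trace norm at most $\|A\|_1$) and run the same argument on the resulting expansion.

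Summing over $k$ and letting $\eta\to 0$ then yields the proposition. The main technical obstacle I expect is bookkeeping the constant in the last step: the naive complex-scalar expansion contributes a factor of $4$, and to achieve the clean bound stated (constant $1$) one should instead exploit the Hermiticity of $\rho\in (M_n)_{\mathrm{sa}}\otimes(M_n)_{\mathrm{sa}}$ to restrict to projective decompositions with Hermitian factors, showing that this subfamily already computes $\|\rho\|_{S_n^1\otimes_\pi S_n^1}$. This is exactly the point at which the argument will parallel \cite[Theorem 2.1]{P2012}; in any case, even the weaker inequality with an absolute universal constant is enough for the paper's subsequent application to the Chru\'sci\'nski--Kossakowski PPT states, where only boundedness of the quotient $B_{Q_\rho}(F)/B_C(F)$, not a sharp constant, is at stake.
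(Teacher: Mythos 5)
Your proposal is correct in substance, but it takes a genuinely different route from the paper. The paper's proof is a three-line duality argument: it rewrites the pairing as $\Tr\bigl(\bigl(\sum_{x,a}E_x^a\otimes F_x^a\bigr)\rho\bigr)$, bounds it by $\bigl\|\sum_{x,a}E_x^a\otimes F_x^a\bigr\|_{\mathbb{M}_n\otimes_\epsilon\mathbb{M}_n}\,\|\rho\|_{S_n^1\otimes_\pi S_n^1}$ using the duality between the injective and projective tensor norms, then uses the contractivity of the map $e_x\otimes e_a'\mapsto E_x^a$ on $\ell_1^n(\ell_\infty^n)$ to pass to $\|F\|_{\ell_1^n(\ell_\infty^n)\otimes_\epsilon\mathbb{M}_n}$, and finally invokes Proposition \ref{prop:minandepsilon} to compare with $B_C(F)$. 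You instead unfold the projective norm of $\rho$ into an explicit decomposition $\sum_k A_k\otimes B_k$ and control each elementary tensor by the physically transparent observation that $\sigma_x^a=\Tr(E_x^aA)\,B$ is an incomplete LHS assemblage, completed by splitting into positive (and Hermitian) parts; this is essentially a hands-on re-proof of the duality step plus the content of Proposition \ref{prop:minandepsilon} restricted to elementary tensors. What your version buys is self-containedness and transparency about where the constant comes from; what it costs is exactly the bookkeeping you flag: the restriction of a Hermitian $\rho$ to decompositions with Hermitian factors is in general only comparable to $\|\rho\|_{S_n^1\otimes_\pi S_n^1}$ up to a factor $2$ (writing $A_k\otimes B_k+A_k^*\otimes B_k^*=2(A_k'\otimes B_k'-A_k''\otimes B_k'')$ doubles the bound), so your argument yields the inequality with a universal constant rather than constant $1$. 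You should not be troubled by this: the paper's own last step, via Proposition \ref{prop:minandepsilon}, only gives $\|F\|_\epsilon\leq 16\,B_C(F)$, so its stated constant $1$ is likewise really a universal constant, and, as you note, only boundedness of $B_{Q_\rho}(F)/B_C(F)$ is used for the Chru\'sci\'nski--Kossakowski states. Also note that the paper's $B_{Q_\rho}$ is defined with complete POVMs, so your use of incomplete ones only strengthens the statement.
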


\begin{proof}
The proof is more or less the same as the proof in \cite{P2012}. By duality and Proposition \ref{prop:minandepsilon}, we have:
\begin{equation}
\begin{split}
\Big| \sum_{x,a} \Tr (F_x^a \Tr_A(E_x^a \otimes \un \rho) )\Big| & = \Big| \sum_{x,a} \Tr (E_x^a \otimes F_x^a \rho) \Big|\\
& \leq \Big\|\sum_{x,a} E_x^a \otimes F_x^a\Big\|_{\mathbb{M}_n \otimes_{\epsilon} \mathbb{M}_n} \|\rho\|_{S_n^1 \otimes_{\pi} S_n^1}\\
& \leq \Big\|\sum_{x,a} e_x \otimes e_a' \otimes F_x^a \Big\|_{\ell_1^n(\ell_\8^n) \otimes_{\epsilon} \mathbb{M}_n} \|\rho\|_{S_n^1 \otimes_{\pi} S_n^1}\\
& \leq B_C(F) \|\rho\|_{S_n^1 \otimes_{\pi} S_n^1}.
\end{split}
\end{equation}
 \end{proof}

Now it remains to calculate the projective norm of Isotropic-like state and Werner-like state. It can be proved that the norms of both states are bounded by 2. For instance, for any Isotropic-like state $\rho,$
\begin{equation}
\begin{split}
\|\rho\|_{\pi} & \leq \sum_{i,j=1}^n |a_{ij}| + \sum_{i\neq j=1}^n c_{ij} \leq 1+ \sum_{i\neq j} |a_{ij}|\\
& \leq 1+ \sum_{i\neq j} \sqrt{c_{ij} c_{ji}} \leq 1+ \left( \sum_{i\neq j} c_{ij}\right)^2 \leq 2.
\end{split}
\end{equation}

\begin{remark}
The PPT  states described in \cite{CK2006} cover many PPT entangled states known in literature, however it does not describe
bound entangled states constructed via unextendible product bases (UPB) \cite{Bennett1999}. Unfortunately, up to now we can't estimate the projective norm of PPT states which are constructed by UPB.
\end{remark}



\section{Dichotomic case}

In \cite{PWPVM2008} the authors considered the dichotomic setting for Bell scenario (see also \cite{Pisier2012}). It is more or less a reformulation of the standard setting for the Bell scenario with two outcomes. It turns out that in the steering scenario it is no longer the case: standard and dichotomic settings are not equivalent. The details of this phenomenon are discussed in \cite{steering_short}. Here we describe its particular exemplification by using the operator space technique.

In the dichotomic setting for steering scenario we assume that the measurement for Alice has only two outcomes $\pm 1.$ Alice prepares two correlated particles sharing with a quantum state $\rho\in B(H_A \otimes H_B)$ and send one of them to Bob. Alice wants to convince Bob that $\rho$ is an entangled state by doing dichotomic measurement $-\un \leq E_x \leq \un, x=1,\ldots, n.$ After Alice's measurement has been done, Bob obtains the conditional states
\begin{equation}\label{eq:7}\sigma_x= \Tr_A (\rho (E_x \otimes \un)).\end{equation}
If the nature is described by a LHS model, then
\begin{equation}\label{eq:8}
\sigma_x= \sum_\lambda p(\lambda) E_x(\lambda) \sigma_\lambda,\end{equation}
where $p(\lambda)$ is a probability distribution function, $E_x(\lambda)= \pm 1$ is the deterministic outcome obtained by Alice if she does the measurement $E_x,$ and $\sigma_\lambda$ is a density matrix of $B(H_B).$

For $n=\dim(H_B)$ we define a steering functional as a set of $n\times n$ matrices $F_x, x=1,\ldots, n.$
Analogously to the standard case, we can define the LHS bound of $F$ as:
\beq B_C (F)=\sup \left\{\left|\sum_x\Tr(F_x \sigma_x)\right|: \sigma_x \;\text{satisfy} \;\eqref{eq:7} \right\},\eeq
and the quantum bound of $F$ as
\beq B_Q (F)=\sup \left\{\left|\sum_x \Tr(F_x \sigma_x)\right|: \sigma_x \;\text{satisfy} \; \eqref{eq:8} \right\}.\eeq
One can also apply the argument of Proposition \ref{prop:minandepsilon} to obtain
$$B_C(F)\simeq \|F\|_{\ell_1^n \otimes_{\epsilon} \mathbb{M}_n}\quad\mbox{and}\quad B_Q(F) \simeq \|F\|_{\ell_1^n \otimes_{\min} \mathbb{M}_n}.$$

\begin{remark}
In \cite{PWPVM2008}, the authors considered the Bell scenario in dichotomic setting: for a Bell inequality $M=\sum_{x,y=1}^n M_{x,y} e_x \otimes e_y \in \ell_1^n \otimes \ell_1^n,$ the classical bound $B_C(M)$ is equivalent to the norm $\|M\|_{\ell_1^n \otimes_\epsilon \ell_1^n}$ and the quantum bound $B_Q(M) \simeq \|M\|_{\ell_1^n \otimes_{\min} \ell_1^n.}$ By Grothendieck's theorem \cite{Pisier2012}, we have $\ell_1^n \otimes_\epsilon \ell_1^n \simeq \ell_1^n \otimes_{\min} \ell_1^n.$ So for any bipartite dichotomic Bell functional $M$, there always exists an universal constant $K$ (not depend on the dimension), such that $B_Q(M) \leq K B_C(M).$ They also proved that for tripartite case, this universal constant does not exist!
\end{remark}

The situation in steering scenario differs from the features of the Bell scenario described in the above remark. Since $\ell_1^n \otimes_\epsilon \mathbb{M}_n \ncong \ell_1^n \otimes_{\min} \mathbb{M}_n$, one should expect that there is a room for a steering functional with unbounded largest violation. Such a functional has been provided in 
\cite{steering_short}. Here we restate this result as the
following theorem, and provide a proof referring  to operator space formalism.

\begin{thm}\label{thm:unbounded}
For every $n\in \mathbb{N},$ there exists a steering functional $(F_x)_{x=1,\ldots,n} \in \ell_1^n\otimes\mathbb{M}_n$ with unbounded largest violation of order $\sqrt{\log{n}}$.
\end{thm}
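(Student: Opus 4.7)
My plan is to rephrase the existence of a dichotomic steering functional with unbounded largest violation as an operator-space statement about two norms on $\ell_1^n \otimes \mathbb{M}_n$, and then construct such a functional via the factorisation strategy used in Theorem~\ref{thm:largeviolation}. By the dichotomic analog of Proposition~\ref{prop:minandepsilon} recorded just before the theorem, the classical and quantum bounds satisfy $B_C(F)\simeq \|F\|_{\ell_1^n\otimes_\epsilon \mathbb{M}_n}$ and $B_Q(F)\simeq \|F\|_{\ell_1^n\otimes_{\min}\mathbb{M}_n}$, so the largest-violation ratio is equivalent to $\|F\|_{\min}/\|F\|_\epsilon$ up to a universal constant. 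It therefore suffices to exhibit an element $F\in \ell_1^n\otimes \mathbb{M}_n$ with this ratio of order $\sqrt{\log n}$.

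The next step is to pass to the dual map: writing $F=\sum_x e_x\otimes F_x$ identifies $F$ with the linear map $T_F:\ell_\infty^n\to \mathbb{M}_n$, $T_F(e_x')=F_x$. Under the natural operator-space dualities (with $\ell_1^n$ carrying its predual structure), $\|F\|_\epsilon=\|T_F\|$ and $\|F\|_{\min}=\|T_F\|_{cb}$, so the task becomes the construction of a bounded-but-not-completely-bounded map $T:\ell_\infty^n\to\mathbb{M}_n$ with $\|T\|_{cb}/\|T\|\gtrsim \sqrt{\log n}$, precisely the reformulation announced in the introduction.

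For the construction I would mimic the factorisation of Theorem~\ref{thm:largeviolation} in the simpler dichotomic context. Take the column-type map $W:\ell_2^k\to\mathbb{M}_n$, $W(e_j)=|1\rangle\langle j|$ from \eqref{e:W}, which is a contraction and carries the same non-trivial completely bounded geometry exploited in Proposition~\ref{prop:upperbound}. Pair it with a Bernoulli/Rademacher-style map $U:\ell_2^k\to\ell_1^n$ whose classical norm $\|U\|$ is a universal constant (obtained from Khintchine's inequality applied to random sign matrices) and which admits a partner $U':\ell_1^n\to R_k$ with $U'U=\mathrm{id}_{\ell_2^k}$ and cb-norm controlled by little Grothendieck --- the technical price, compared with Fact~\ref{fact:map}, being an additional logarithmic factor. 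Form the candidate
\[
F=(U\otimes W)\Big(\sum_{j=1}^{k}e_j\otimes e_j\Big)\in \ell_1^n\otimes \mathbb{M}_n.
\]

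The two norm estimates then run parallel to \eqref{eq:epsilon}--\eqref{eq:min}. The upper bound
\[
\|F\|_\epsilon \leq \|U\|\cdot\|W\|\cdot\Big\|\sum_j e_j\otimes e_j\Big\|_{\ell_2^k\otimes_\epsilon\ell_2^k} \lesssim 1
\]
is immediate since the last factor equals $1$ and $\|W\|=1$. For the lower bound, apply $U'\otimes\mathrm{id}$ to $F$; using $U'U=\mathrm{id}$ one obtains
\[
(U'\otimes\mathrm{id})(F)=\sum_j e_j\otimes |1\rangle\langle j|\in R_k\otimes\mathbb{M}_n,
\]
whose minimal tensor norm equals $\sqrt{k}$, computed exactly as in \eqref{eq:min} via $\|\sum_j|1\rangle\langle j|\,|j\rangle\langle 1|\|^{1/2}$. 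Dividing by $\|U'\|_{cb}$ gives $\|F\|_{\min}\gtrsim \sqrt{k}/\|U'\|_{cb}$, which after the natural choice of $k$ balances into the announced $\sqrt{\log n}$ lower bound on the ratio.

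The main obstacle is the joint construction of the pair $(U,U')$ with the right quantitative parameters. Unlike the Bell setting of Fact~\ref{fact:map}, where the auxiliary $\ell_\infty^n$ factor absorbs a $\sqrt{\log n}$ for free, in the dichotomic case we have only $\ell_1^n$ available, and Grothendieck's theorem prevents any gap between $\otimes_\epsilon$ and $\otimes_{\min}$ from coming out of $\ell_1^n$ alone. Consequently the entire logarithmic violation has to be engineered through the interaction of $U$ with the non-commutative column geometry of $W$, and the precise balancing of little-Grothendieck and Kashin-type constants is what determines the $\sqrt{\log n}$ scaling quoted in the theorem and reflects the qualitative gap between this bound and the polynomial $\sqrt{n}/\sqrt{\log n}$ violation of Theorem~\ref{thm:largeviolation}.
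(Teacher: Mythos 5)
Your reduction is fine and matches the paper: identifying $F=\sum_x e_x\otimes F_x$ with $T_F:\ell_\infty^n\to\mathbb{M}_n$, using $\ell_1^n\otimes_\epsilon\mathbb{M}_n=B(\ell_\infty^n,\mathbb{M}_n)$ and $\ell_1^n\otimes_{\min}\mathbb{M}_n=CB(\ell_\infty^n,\mathbb{M}_n)$, the problem becomes exhibiting a bounded but not completely bounded map with ratio $\gtrsim\sqrt{\log n}$. The gap is in your construction: the factorisation through a pair $U:\ell_2^k\to\ell_1^n$, $U':\ell_1^n\to R_k$ with $U'U=\mathrm{id}_{\ell_2^k}$ provably cannot certify any growing violation. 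Since $\ell_\infty^n$ is a minimal operator space, $\ell_1^n=(\ell_\infty^n)^*$ carries the maximal structure, so $\|U'\|_{cb}=\|U'\|$ (the Banach norm into $\ell_2^k$); and by Grothendieck's theorem every bounded map from an $\mathcal{L}_1$-space into a Hilbert space is $1$-summing, $\pi_1(U')\leq K_G\|U'\|$. Then
\begin{equation*}
\sqrt{k}\;=\;\pi_2(\mathrm{id}_{\ell_2^k})\;\leq\;\pi_1(\mathrm{id}_{\ell_2^k})\;=\;\pi_1(U'U)\;\leq\;\pi_1(U')\,\|U\|\;\leq\;K_G\,\|U'\|\,\|U\|,
\end{equation*}
so $\|U\|\,\|U'\|_{cb}\gtrsim\sqrt{k}$ for every such pair. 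Your estimates give $\|F\|_\epsilon\leq\|U\|$ and $\|F\|_{\min}\geq\sqrt{k}/\|U'\|_{cb}$, hence the ratio you can certify is at most $\sqrt{k}/(\|U\|\,\|U'\|_{cb})\leq K_G$, a universal constant, for \emph{every} choice of $k$; no ``balancing'' produces $\sqrt{\log n}$. (This is exactly why a copy of $\ell_2^{\delta n}$ in $\ell_1^n$ cannot be well complemented, in contrast to Fact~\ref{fact:map}, where the domain is $\ell_1^n(\ell_\infty^n)$ and the extra $\ell_\infty$ layer — absent here — is what evades the $1$-summing obstruction, at the cost of $\|V\|\lesssim\sqrt{\log n}$.) You name this obstacle in your last paragraph but do not overcome it, and the asserted conclusion does not follow.

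The paper's proof uses a genuinely different mechanism. It takes the Loebl construction: anticommuting self-adjoint unitaries $A_1,\ldots,A_m\in\mathbb{M}_{2^m}$ built from tensor products of Pauli matrices, and the map $\varphi(e_i)=\frac{1}{\sqrt m}A_i$, which satisfies $\|\varphi\|\leq\sqrt2$ by the anticommutation relations, while $\|\varphi\|_{cb}\geq\sqrt m$ because a unit vector $z$ with $(A_i\otimes A_i)z=z$ makes $\|\frac{1}{\sqrt m}\sum_i A_i\otimes A_i\|\geq\sqrt m$. The target of $\varphi$ is $\mathbb{M}_{2^m}$, so to land in $\mathbb{M}_n$ one must take $m\approx\log_2 n$ and compose with the coordinate projection $\ell_\infty^n\to\ell_\infty^m$ and the corner embedding $\mathbb{M}_{2^m}\hookrightarrow\mathbb{M}_n$; the $\sqrt{\log n}$ in the theorem comes from this dimensional bookkeeping ($2^m\leq n$), not from any interplay of random embeddings with the column map $W$. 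If you want to repair your argument, replace the $(U,W)$ factorisation by this spin-system construction (or any other explicit bounded-not-cb map from $\ell_\infty^m$ into $\mathbb{M}_{2^m}$) and keep your first two paragraphs as the reduction.
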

\begin{proof}
It is enough to prove the following claim: For every $n\in \mathbb{N},$ there exists an element $F=\sum_{x=1}^{n} e_x \otimes F_x \in \ell_1^n \otimes \mathbb{M}_n,$ such that
\beq \frac{\|F\|_{\min}}{\|F\|_{\epsilon}} \gtrsim \sqrt{\log{n}}.\eeq
Since we know that $\ell_1^n \otimes_\epsilon \mathbb{M}_n = B(\ell_\8^n, \mathbb{M}_n)$ isometrically and $\ell_1^n \otimes_{\min} \mathbb{M}_n= CB(\ell_\8^n, \mathbb{M}_n)$ completely isometrically, it is enough to prove that there exists a map $\phi$ from $\ell_\8^n$ to $\mathbb{M}_n,$ such that $\|\phi\|\lesssim 1$ and $\|\phi\|_{cb}\gtrsim \sqrt{\log{n}}.$ Here we will use a fact described in \cite{L1976}: there is a map $\varphi: \ell_\8^n \to \mathbb{M}_{2^n},$ such that $\varphi$ is bounded but not completely bounded. For the  reader's convenience, we just rewrite their proof. Let $\sigma_i, i=x,y,z$ be Pauli matrices. Let
\begin{equation}
\begin{split}
A_1 &= \sigma_x \otimes \overbrace{\un\otimes \ldots \otimes \un}^{n-1}, \\
A_2 &= \sigma_z \otimes \sigma_x \otimes\overbrace{ \un\otimes \ldots \otimes \un}^{n-2},\\
& \vdots\\
A_k &= \overbrace{\sigma_z \otimes \ldots \otimes \sigma_z}^{k-1} \otimes \sigma_x \otimes \overbrace{\un \otimes \ldots \otimes \un}^{n-k},\; 3\leq k \leq n.
\end{split}
\end{equation}
It is easy to check that these $A_i \in \mathbb{M}_{2^n}, i=1,\ldots,n$ satisfies $A_i = A_i ^*, A_i A_j+A_j A_i = 2\delta_{ij} \un_{2^n}.$ For any $A= \sum_i a_i A_i \in \mathbb{M}_{2^n},$ since $A^*A+AA^*=2\sum_i|a_i|^2 \un_{2^n},$ then $\|A\|\leq \sqrt{2}\sqrt{\sum_i |a_i|^2}.$ Now we define the map $\varphi: \ell_\8^n \to \mathbb{M}_{2^n}$ as $\varphi(e_i)=\frac{1}{\sqrt{n}} A_i, i=1,\ldots,n.$ Then
\begin{equation}
\left\|\varphi\left(\sum_i a_i e_i\right)\right\| = \left\| \frac{1}{\sqrt{n}} \sum_i a_i A_i\right\|\leq \sqrt{\frac{2}{n}} \sqrt{\sum_i |a_i|^2}\leq \sqrt{2} \sup_i |a_i|.
\end{equation}
Thus $\|\varphi\| \lesssim 1.$ On the other hand, we let $\theta= \sum_i A_i \otimes e_i \in \mathbb{M}_{2^n} \otimes \ell_\8^n.$ Note that $\|\theta\|=\sup_i \|A_i\|=1,$ and by using the fact that \cite{L1976}: there is a unit vector $z\in \mathbb{C}^{2^n} \otimes \mathbb{C}^{2^n}$ such that $(A_i \otimes A_i)(z)=z$ for any $i$. Then
\begin{equation}
\begin{split}
\|\varphi\|_{cb} & \geq \|\un_{\mathbb{M}^{2^n}} \otimes \varphi(\theta)\| = \left\|\frac{1}{\sqrt{n}}\sum_i A_i \otimes A_i\right\|\\
& \geq \frac{1}{\sqrt{n}} \left|\left\langle \sum_i (A_i\otimes A_i)(z), z \right\rangle\right|= \frac{1}{\sqrt{n}} \left|\sum_i \langle z,z \rangle\right|=\sqrt{n}.
\end{split}
\end{equation}

Since for very natural number $n\geq 2,$ there exists natural number $m,$ such that $n \geq 2^m,$
consider the diagram
\beq  \ell_\8^n \stackrel{\omega_1}{\longrightarrow} \ell_\8^m \stackrel{\varphi}{\longrightarrow}  \mathbb{M}_{2^m} \stackrel{\omega_2}{\longrightarrow} \mathbb{M}_n,\eeq
where $\omega_1$ projects $\ell_\8^n$ onto the first $m$ coordinates, and $\omega_2$ embeds $\mathbb{M}_{2^m}$ into the top $2^m\times 2^m$ corner of $\mathbb{M}_n.$ Set $\phi= \omega_2\circ \varphi \circ \omega_1: \ell_\8^n \to \mathbb{M}_n,$ then
\beq \|\phi\|= \|\varphi\|\lesssim 1; \;\; and \;\; \|\phi\|_{cb}= \|\varphi\|_{cb}\geq \sqrt{m}\geq \sqrt{\log{n}}.\eeq
Thus we can find a map $\phi: \ell_\8^n \to \mathbb{M}_n,$ such that $\frac{\|\phi\|_{cb}}{\|\phi\|} \gtrsim \sqrt{\log{n}}.$
If $F=\sum_{x=1}^n e_x \otimes \phi(e_x),$ then $F$ satisfies the statement of the theorem.
\end{proof}

\begin{remark}
This result also can be traced back to the work of Paulsen \cite{Paulsen1992} (or see \cite[Section 3.3]{Pisier2003}). For any norm space $E$, he defined a constant
\begin{equation}
\alpha (E) = \|id: \min(E) \mapsto \max(E)\|_{cb},
\end{equation}
where $\min(E)$ ($\max(E)$) is the minimal (maximal) admissible operator space structure of $E.$ It can be proved \cite{Paulsen1992} that this constant is equal to
$\sup \{\|T\|_{cb}: \|T\|\leq 1, T: \min(E) \mapsto B(H)\},$ $H$ is arbitrary. For $E= \ell_\infty^n,$  due to Loebl \cite{L1976} and Haagerup \cite{Haagerup1983}'s work, we have $\alpha(\ell_\infty^n) \geq \sqrt{\frac{n}{2}}$ \cite{Paulsen1992}.
\end{remark}

From this theorem, in the dichotomic case, the unbounded largest violation derives from some bounded but not completely bounded map. Now we will discuss for what kind of steering functional $F_x,$ we can always get bounded largest violation. For any positive steering functional $(F_x)_{x=1,\ldots,n},$ i.e., $F_x \geq 0$ for every $x,$ the LHS bound:
\begin{equation}\begin{split}
B_C (F) &= \sup \left\{\left|\sum_x \Tr\left(F_x \sum_\lambda p(\lambda) E_x(\lambda) \sigma_\lambda \right)\right| \right\}\\
&= \sup_\lambda  \left\|\sum_x E_x (\lambda) F_x\right\|_{\mathbb{M}_n} \leq \left\|\sum_x  F_x\right\|_{\mathbb{M}_n}.
\end{split}
\end{equation}
On the other hand,
\begin{equation}\begin{split}
B_C (F) &\approx \left\|\sum_x e_x \otimes F_x \right\|_{\ell_1^n \otimes_{\epsilon} \mathbb{M}_n} \\
&= \sup \left\{\left|\sum_x f(e_x) g(F_x) \right|: \;\; f\in \mathbb{B}_{\ell_\8^n}, g\in \mathbb{B}_{S_1^n} \right\} \geq \left\|\sum_x F_x\right\|_{\mathbb{M}_n}.
\end{split}
\end{equation}
Thus $B_C(F)\approx \left\|\sum_x F_x\right\|_{\mathbb{M}_n}.$

For the quantum bound,
$B_Q(F) \approx \left\|\sum_x e_x \otimes F_x \right\|_{\ell_1^n \otimes_{\min} \mathbb{M}_n}.$ It is known that \cite{JPPVW2010,Pisier2003}
\begin{equation}\begin{split}
\Big\|\sum_x e_x \otimes F_x \Big\|_{\ell_1^n \otimes_{\min} \mathbb{M}_n} &= \sup \Big\{ \Big\| \sum_x U_x \otimes F_x \Big \|_{\mathbb{M}_n \otimes_{\min} \mathbb{M}_n} : \;\; U_x \in \mathbb{M}_n, U_x U_x^*= U_x^* U_x = \un.\Big\}\\
&= \inf \Big\{ \Big\| \sum_x b_x b_x^* \Big\|^{\frac{1}{2}} \; \Big\| \sum_x c_x^* c_x \Big\|^{\frac{1}{2}}: \; F_x= b_x c_x.\Big\}
\end{split}
\end{equation}
If $F_x$ is a positive matrix for every $x=1,\ldots,n,$ then by the lemma 2 of \cite{JPPVW2010}, we know
\beq \|\sum_x e_x \otimes F_x \|_{\ell_1^n \otimes_{\min} \mathbb{M}_n} = \|\sum_x F_x \|_{\mathbb{M}_n}.\eeq

We end this section with following remark.
\begin{remark}
If $F_x \geq 0,$
then $\|\sum_x e_x \otimes F_x\|_{\ell_1^n \otimes_{\min} \mathbb{M}_n} \approx  \|\sum_x e_x \otimes F_x\|_{\ell_1^n \otimes_{\epsilon} \mathbb{M}_n}.$ In other words, the quantum bound of positive dichotomic steering functional is always bounded by its LHS bound.
\end{remark}

\section{Conclusion}
In this paper, we have used operator space approach to study violation of steering inequality. Before, the approach had been successfully used in Bell scenario \cite{JP2011,JPPVW2010,PWPVM2008}. In both cases, operator space was connected to the largest violation of corresponding functional. The main difference is the algebraic tensor product considered in each case. In Bell scenario, the algebraic tensor product is $\ell_1^n(\ell_\8^m) \otimes \ell_1^n(\ell_\8^m),$ while in steering scenario, it is $\ell_1^n(\ell_\8^m) \otimes B(H_d).$ Since our work is an extension of applying Junge-Palazuelos approach to steering scenario, we can easily construct a probabilistic steering functional with unbounded largest violation. And for this functional, non maximally entangled state will give larger violation. However, not all properties of steering functionals can be recovered from the Bell scenario. We have shown in \cite{steering_short}  a phenomenon characteristic only for the steering scenario, i.e, there is a bipartite dichotomic steering functional with unbounded largest violation. In this paper, we have studied this phenomenon in the framework of operator space theory.

\section*{Acknowledgements}
We would like to thank Prof. W.A. Majewski for valuable remarks and fruitful discussion. The work is supported by
Foundation for Polish Science TEAM project co-financed by the EU European Regional Development
Fund, Polish Ministry of Science and Higher Education Grant no. IdP2011 000361, ERC AdG grant QOLAPS and  EC grant RAQUEL and a NCBiR-CHIST-ERA Project QUASAR.
Part of this work was done in National Quantum Information Center of
Gda{\'n}sk. Part of this work was done when the authors attended the program “Mathematical
Challenges in Quantum Information” at the Isaac Newton Institute for Mathematical
Sciences, University of Cambridge.


\end{document}